\def \VersionLong {}

\documentclass[a4paper,10pt]{llncs}

\usepackage[utf8]{inputenc}
\usepackage[english]{babel}

\sloppy 

\usepackage[ruled,vlined,linesnumbered]{algorithm2e}
	\SetKwInOut{Input}{input}
	\SetKwInOut{Output}{output}
\SetAlFnt{\scriptsize}


\SetCommentSty{mycommfont}

\usepackage{subcaption}

\usepackage{paralist} 

\newenvironment{ienumeration}
	{\ifdefined\VersionLong\begin{enumerate}\else\begin{inparaenum}[\itshape i\upshape)]\fi}
	{\ifdefined\VersionLong\end{enumerate}\else\end{inparaenum}\fi}


\usepackage{amsmath} 
\usepackage{amssymb} 

\usepackage{graphicx}
\graphicspath{{img/}{./}}

\ifdefined\VersionWithComments
	\usepackage{draftwatermark}
	\SetWatermarkText{draft}
	\SetWatermarkScale{15}
	\SetWatermarkColor[gray]{0.9}
\fi

\usepackage[svgnames,table]{xcolor}
\definecolor{darkblue}{rgb}{0.0,0.0,0.6}
\definecolor{darkgreen}{rgb}{0, 0.5, 0}
\definecolor{darkpurple}{rgb}{0.7, 0, 0.7}
\definecolor{darkblue}{rgb}{0, 0, 0.7}

\usepackage[
		colorlinks=true,
	\ifdefined \VersionWithComments
		pagebackref=true,
	\fi
		citecolor=darkgreen,
		linkcolor=darkblue,
		urlcolor=darkpurple,
	]{hyperref}

\usepackage[capitalise,english,nameinlink]{cleveref} 
\crefname{line}{\text{line}}{\text{lines}} 

\usepackage{enumitem} 


\newcommand{\defProblem}[3]
{%
\noindent\fcolorbox{black}{blue!15}{
	\begin{minipage}{.95\columnwidth}
		\textbf{#1 problem:}\\
		\textsc{Input}: #2\\
		\textsc{Problem}: #3
	\end{minipage}
}
	
	\smallskip
	
}

\ifdefined \VersionLong
	\newcommand{\LongVersion}[1]{\ifdefined\VersionWithComments{\color{red!60!black}#1}\else#1\fi}
	\newcommand{\ShortVersion}[1]{\ifdefined\VersionWithComments{\color{black!40}#1}\fi}
\else
	\newcommand{\LongVersion}[1]{\ifdefined\VersionWithComments{\color{black!40}#1}\fi}
	\newcommand{\ShortVersion}[1]{\ifdefined\VersionWithComments{\color{red!40!black}#1}\else#1\fi}
\fi

\usepackage{tikz}
\usetikzlibrary{arrows,automata,shapes} 
\tikzstyle{every node}=[initial text=]
\tikzstyle{location}=[rectangle, rounded corners, minimum size=12pt, draw=black, fill=blue!10, inner sep=2pt]
\tikzstyle{splitlocation}=[location, rectangle split, rectangle split horizontal, rectangle split parts=2, thick]
\tikzstyle{locationa}=[location,thick]
\tikzstyle{locationb}=[location,draw=gray,thin]
\tikzstyle{invariant}=[draw=black, dotted, inner sep=1pt] 
\tikzstyle{final}=[double, fill=green!50!white]
\tikzstyle{urgent}=[fill=yellow, thick, dotted] 

\ifdefined \VersionWithComments
	\usepackage{marginnote}
	\newcommand{\marginX}{\marginnote{\huge{\quad\quad\textbf{!}\quad\quad}}}
	\newcommand{\ea}[1]{\mbox{}{\color{blue}\marginX{}\textbf{[\'Etienne}: #1]}}
	\newcommand{\jaco}[1]{\mbox{}{\color{orange}\marginX{}\textbf{[Jaco}: #1]}}
	\newcommand{\lp}[1]{\mbox{}{\color{purple}\marginX{}\textbf{[Laure}: #1]}}
	\newcommand{\vb}[1]{\mbox{}{\color{green!50!black}\marginX{}\textbf{[Vincent}: #1]}}
	\newcommand{\instructions}[1]{{\color{red}\marginX{}{[\textbf{Instructions:} ``#1'']}}}
	\newcommand{\reviewer}[2]{\mbox{}{\color{red}\marginX{}\textbf{[Reviewer #1}: ``#2'']}}
	\newcommand{\todo}[1]{\mbox{}{\color{red}{\marginX{}\textbf{TODO}\ifx#1\\\else:\ \fi #1}}} 
\else
	\newcommand{\instructions}[1]{}
	\newcommand{\ea}[1]{}
	\newcommand{\jaco}[1]{}
	\newcommand{\lp}[1]{}
	\newcommand{\vb}[1]{}
	\newcommand{\reviewer}[2]{}
	\newcommand{\todo}[1]{}
\fi

\ifdefined\VersionWithComments
	\usepackage[pagewise]{lineno} 
	\linenumbers
	
\fi

\newcommand{\init}{_0}
\newcommand{\styleSymbStatesSet}[1]{\ensuremath{\mathbf{#1}}}

\newcommand{\A}{\ensuremath{\mathcal{A}}}
\newcommand{\ALU}{\ensuremath{\A_\mathit{LU}}}
\newcommand{\Azeroinf}{\ensuremath{\A_{0,\infty}}}
\newcommand{\Actions}{\Sigma}
\newcommand{\action}{\ensuremath{a}}
\newcommand{\assign}{\leftarrow}

\newcommand{\C}{C}
\newcommand{\Cinit}{\C\init}
\newcommand{\Clock}{\mathbb{X}} 
\newcommand{\ClockCard}{\ensuremath{{|\Clock|}}} 
\newcommand{\clock}{x} 
\newcommand{\clockval}{\nu_\Clock} 
\newcommand{\ClocksZero}{\vec{0}}
\newcommand{\compOp}{\bowtie}
\newcommand{\compOpGeq}{\triangleright\!}
\newcommand{\compOpLeq}{\triangleleft}
\newcommand{\CTrue}{\mathbf{true}}
\newcommand{\CFalse}{\mathbf{false}}
\newcommand{\edge}{e}
\newcommand{\Edges}{E}
\newcommand{\longuefleche}[1]{\stackrel{#1}{\longrightarrow}}
\newcommand{\longueflecheRel}[1]{\stackrel{#1}{\mapsto}}

\newcommand{\flecheRel}{{\rightarrow}}
\newcommand{\Fleche}[1]{\stackrel{#1}{\Rightarrow}}
\newcommand{\grandn}{{\mathbb N}}
\newcommand{\grandq}{{\mathbb Q}}
\newcommand{\grandqplus}{\grandq_{+}} 
\newcommand{\grandr}{\ensuremath{\mathbb R}}
\newcommand{\grandrplus}{\ensuremath{\grandr_{+}}} 
\newcommand{\grandz}{{\mathbb Z}}
\newcommand{\guard}{g}

\newcommand{\invariant}{\mathcal{I}}
\newcommand{\K}{\ensuremath{K}}

\newcommand{\Kopt}{\ensuremath{\mathit{Opt}}}
\newcommand{\Topt}{\ensuremath{T_\mathit{opt}}}
\newcommand{\globaltime}{\ensuremath{\clock_\mathit{global}}}
\newcommand{\globaltimeparam}{\ensuremath{\param_\mathit{global}}}

\newcommand{\KFalse}{\bot}

\newcommand{\false}{\ensuremath{\mathit{False}}}

\newcommand{\loc}{\ensuremath{\ell}} 
\newcommand{\locinit}{\loc\init}
\newcommand{\Loc}{L} 
\newcommand{\locfinal}{\ensuremath{\loc_f}} 
\newcommand{\lterm}{\mathit{lt}}
\newcommand{\Param}{\mathbb{P}} 
\newcommand{\param}{p} 
\newcommand{\ParamCard}{\ensuremath{{|\Param|}}} 
\newcommand{\pval}{\nu_\Param} 
\newcommand{\pvalzeroinf}{\ensuremath{v_{0,\infty}}} 
\newcommand{\PZG}{\ensuremath{\mathcal{PZG}}} 
\newcommand{\R}{{\mathbb{R}}}
\newcommand{\Rgeqzero}{\R_{\geq 0}}

\newcommand{\sinit}{s\init} 
\newcommand{\LocsTarget}{T} 
\newcommand{\state}{\ensuremath{s}} 
\newcommand{\States}{S} 
\newcommand{\timelapse}[1]{#1^\nearrow}
\newcommand{\Runs}{\ensuremath{\mathit{Runs}}} 
\newcommand{\varrun}{\rho} 

\newcommand{\duration}{\ensuremath{\mathit{duration}}} 

\newcommand{\Passed}{\styleSymbStatesSet{P}}
\newcommand{\symbstate}{\ensuremath{\styleSymbStatesSet{s}}} 
\newcommand{\SymbState}{\ensuremath{\styleSymbStatesSet{S}}} 
\newcommand{\symbstateinit}{\symbstate\init} 
\newcommand{\symbtrans}{{\Rightarrow}} 
\newcommand{\Waiting}{\styleSymbStatesSet{W}}
\newcommand{\Queue}{\styleSymbStatesSet{Q}}
\newcommand{\sopt}{\ensuremath{\symbstate_\mathit{opt}}}
\newcommand{\timeval}{\ensuremath{t}} 
\newcommand{\Tlow}{\ensuremath{\timeval}}

\newcommand{\resets}{R}
\newcommand{\project}[2]{\ensuremath{#1{\downarrow_{#2}}}}
\newcommand{\projectP}[1]{\ensuremath{#1{\downarrow_{\Param}}}}
\newcommand{\reset}[2]{\ensuremath{[#1]_{#2}}}
\newcommand{\valuate}[2]{\ensuremath{#2(#1)}}
\newcommand{\wv}[2]{#1|#2} 

\newcommand{\MinTReach}{\ensuremath{\mathit{MinTimeReach}}}

\newcommand{\PMinTReach}{\ensuremath{\mathit{MinTimePTA}}}

\newcommand{\SynthPMinTReach}{\ensuremath{\mathit{MinTimeSynth}}}

\newcommand{\MinPReach}{\ensuremath{\mathit{MinParamReach}}}
\newcommand{\SynthMinPReach}{\ensuremath{\mathit{MinParamSynth}}}
\newcommand{\Reach}{\ensuremath{\mathit{Reach}}}
\newcommand{\stylealgo}[1]{\ensuremath{\textsf{#1}}}
\newcommand{\Synth}{\stylealgo{EFSynth}}
\newcommand{\MinParamSynth}{\stylealgo{MinParamSynth}}
\newcommand{\GetMin}{\stylealgo{GetMin}}

\newcommand{\MinTimeSynth}{\stylealgo{MinTimeSynth}}
\newcommand{\ExpMinTimeSynthNoIM}{\stylealgo{MTSynth-noRed}}
\newcommand{\ExpMinTimeSynth}{\stylealgo{MTSynth}}
\newcommand{\ExpMinTimeReach}{\stylealgo{MTReach}}

\newcommand{\ExpMinParamSynth}{\stylealgo{MPSynth}}
\newcommand{\ExpMinParamReach}{\stylealgo{MPReach}}
\newcommand{\ExpSynth}{\stylealgo{EFSynth}}
\newcommand{\Break}{\ensuremath{\mathbf{break}}}
\newcommand{\Continue}{\ensuremath{\mathbf{continue}}}
\newcommand{\Succ}{\mathsf{Succ}}
\newcommand{\Pop}{\ensuremath{\mathsf{Pop}}}
\newcommand{\Push}{\ensuremath{\mathsf{Push}}}
\DontPrintSemicolon

\newcommand{\imitator}{\textsf{IMITATOR}}


\ifdefined \VersionWithComments
 	\definecolor{colorok}{RGB}{80,80,150}
\else
	\definecolor{colorok}{RGB}{0,0,0}
\fi

\newcommand{\eg}{\textcolor{colorok}{e.\,g.}\xspace}
\newcommand{\ie}{\textcolor{colorok}{i.\,e.}\xspace}
\newcommand{\st}{\textcolor{colorok}{s.t.}\xspace}

\title{Minimal-Time Synthesis for\\Parametric Timed Automata%
\thanks{%
	\LongVersion{This is the author version of the manuscript of the same name published in the proceedings of the 25th International Conference on Tools and Algorithms for the Construction and Analysis of Systems (TACAS 2019).
	This version contains extended definitions, and all proofs.
	}
	This work is partially supported by the ANR national research program PACS
    (ANR-14-CE28-0002) and PHC Van Gogh project PAMPAS.
}}
\author{\'Etienne Andr\'e\inst{1,2,3}\orcidID{0000-0001-8473-9555}\thanks{Partially supported by ERATO HASUO
Metamathematics for Systems Design Project (No.\ JPMJER1603), JST.} \and
Vincent Bloemen\inst{4}\thanks{Supported by the 3TU.BSR project.}\and\\Laure
Petrucci\inst{1}\and Jaco van de Pol\inst{4,5}}
\institute{LIPN, CNRS UMR 7030, Université Paris 13, 
Villetaneuse, France
\and JFLI, CNRS, Tokyo, Japan
\and National Institute of Informatics, Japan
 	\and
University of Twente, The Netherlands
	\and
University of Aarhus, Denmark}

\begin{document}

\pagestyle{plain}

\maketitle

\setcounter{footnote}{0}

\thispagestyle{plain}

\ifdefined \VersionWithComments
	\textcolor{red}{\textbf{This is the version with comments. To disable comments, comment out line~3 in the \LaTeX{} source.}}
\fi

\begin{abstract}
Parametric timed automata (PTA) extend timed automata by allowing parameters
in clock constraints. Such a formalism is for instance useful when reasoning
about unknown delays in a timed system. Using existing techniques, a user can
synthesize the parameter constraints that allow the system to reach a specified
goal location, regardless of how much time has passed for the internal clocks.

We focus on synthesizing parameters such that not only the goal location is
reached, but we also address the following questions: \textit{what is the
minimal time to reach the goal location?} and \textit{for which parameter values can we achieve this?}
We analyse the problem and present
an algorithm that solves it. We also discuss and provide solutions for
minimizing a specific parameter value to still reach the goal.

We empirically study the performance of these algorithms on a benchmark set for
PTAs and show that \emph{minimal-time reachability synthesis} is more efficient to compute than
the standard synthesis algorithm for reachability.
\end{abstract}

\ea{Dear all:
Since no appendix can be used, I suggest we use the macro \texttt{$\setminus$LongVersion\{\}} to \emph{add} sentences in the long version; we'll then put it on a Web page as suggested by TACAS.
According to my macros, if we use the \texttt{VersionLong} command line 6 of the \LaTeX{} code, all added text will be kept and highlighted.
If we comment it out, then all added text will simply disappear, resulting in a 15-page paper.
However, in comment mode (\texttt{VersionWithComments}, line~3 of the \LaTeX{} code) the added text still remains in gray.
}

\ea{If we need space, I suggest to drop all acknowledgments/supports (ANR, ERATO, PAMPAS, 3TU…), and to remove all my affiliations except the P13 one (without SPC of course).}

\instructions{TACAS 2019:
The length of research, case-study, and regular tool papers is limited to 15 pages plus 2 pages for references in the LNCS format. The length of tool-demonstration papers is limited to 6 pages in the LNCS format, including the bibliography.
\\
\emph{Appendices going beyond the above page limits are not allowed!} Additional (unlimited) appendices can be made available separately or as part of an extended version of the paper made available via arXiv, Zenodo, or a similar service, and cited in the paper. The reviewers are, however, not obliged to read such appendices.
\\
As in 2018, TACAS’19 will include artifact evaluation for all types of papers. For regular tool papers and tool demonstration papers, artifact evaluation is compulsory (see the TACAS’19 call for papers), for \emph{research and case-study papers, it is voluntary (papers with accepted artifacts will receive a badge).}
}



\todo{Check for naming consistency: minimal/minimum, optimal/optimum,
dot after paragraph/subsubsection title or
not. Is $\succ$ used correctly everywhere? Pairs with $(..)$ or
$\langle..\rangle$ (I prefer the latter one, but since that would change a lot
in the paper, I just changed it to $(..)$ ). Also, $\false$ or $\CFalse$ or
$\bot$?
minimum lower-bound time / minimum time lower-bound / minimum lower-time bound?
non-empty/non-empty}

\section{Introduction}\label{section:introduction}

\newcommand{\LA}{\ensuremath{\texttt{A}}}
\newcommand{\LB}{\ensuremath{\texttt{B}}}
\newcommand{\LC}{\ensuremath{\texttt{C}}}
\newcommand{\LD}{\ensuremath{\texttt{D}}}
\newcommand{\ALICE}{\ensuremath{\texttt{Alice}}}
\newcommand{\BOB}{\ensuremath{\texttt{Bob}}}
\newcommand{\DA}{\ensuremath{\mathit{D}_\mathit{1}}}
\newcommand{\DB}{\ensuremath{\mathit{D}_\mathit{2}}}

\emph{Timed Automata (TA)}~\cite{AD94} extend finite automata with \emph{clocks}, for
instance to model real-time systems.
\LongVersion{These clocks can be used to constrain
\emph{transitions} between two \emph{locations} with a \emph{guard}, \eg{} the transition can only
be taken if at least 5 time units have passed. Furthermore, aside from taking transitions,
it is possible to \emph{wait} some time at a location. This waiting time can also
be constrained by an \emph{invariant} associated with the location. Multiple clocks
can coexist and clocks may also be reset when taking a transition
(written as $\clock := 0$ for clock $\clock$).

}Timed automata allow for reasoning about temporal properties of the designed
system. In addition to reachability problems, it is possible to compute for TAs the
minimal or maximal time required to reach a specific goal location. Such a
result is valuable in practice, as it can describe the response time of a
system or it may indicate when a component failure occurs.

It may not always be possible to describe a real-time system with a TA.
There are often uncertainties in the timing constraints, for instance how
long it takes between sending and receiving a message.
Optimising specific timing delays to improve the overall throughput of the system
may also be considered, as shown in \cref{ex:intro}. Such uncertainties can however be modelled using a \emph{parametric
timed automaton (PTA)}~\cite{AHV93}. A PTA adds parameters, or unknown
constants, to the TA formalism. By examining the reachability of a goal
location, the parameters get constrained and we can observe which parameter
valuations preserve the reachability of the goal location.

This process, also called \emph{parameter synthesis}, is definitely useful for
analysing reachability properties of a system. However, this technique does
disregard timing aspects to some extent. Given the parameter constraints, it is
no longer possible to give clear boundaries on the time to reach the goal, as
this may depend on the parameter valuations. We focus on the 
parameter synthesis problem while reaching the goal location in minimal time,
as demonstrated in \cref{ex:intro}.

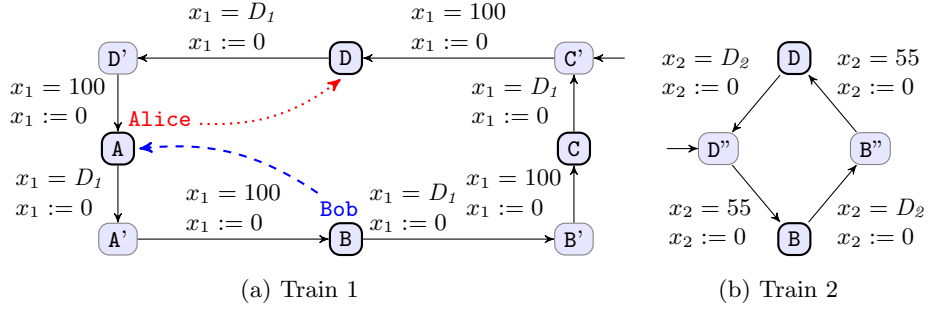
\begin{figure}[tb]
	{\centering
	\begin{subfigure}[b]{0.65\textwidth}
		\begin{tikzpicture}[->, >=stealth']
			\newcommand{\yfactor}{1.2}
			\newcommand{\xfactor}{3}
			\node[locationa] (la) at (0, 0) {\,$\LA$\,\nodepart{two}$\clock_1 \leq \DA$};
			\node[locationb] (laa) at (0, -1*\yfactor) {\,$\LA$'\,\nodepart{two}$\clock_1 \leq 100$};
			\node[locationa] (lb) at (\xfactor, -1*\yfactor) {\,$\LB$\,\nodepart{two}$\clock_1 \leq \DA$};
			\node[locationb] (lbb) at (2*\xfactor, -1*\yfactor) {\,$\LB$'\,\nodepart{two}$\clock_1 \leq 100$};
			\node[locationa] (lc) at (2*\xfactor, 0) {\,$\LC$\,\nodepart{two}$\clock_1 \leq \DA$};
			\node[locationb,initial right] (lcc) at (2*\xfactor, \yfactor) {\,$\LC$'\,\nodepart{two}$\clock_1 \leq 100$};
			\node[locationa] (ld) at (\xfactor, \yfactor) {\,$\LD$\,\nodepart{two}$\clock_1 \leq \DA$};
			\node[locationb] (ldd) at (0, \yfactor) {\,$\LD$'\,\nodepart{two}$\clock_1 \leq 100$};
			\path
			(la) edge[left] node[]{\begin{tabular}{l}$\clock_1=\DA$\\$\clock_1:=0$\end{tabular}} (laa)
			(laa) edge[below] node[above=-3pt]{\begin{tabular}{l}$\clock_1=100$\\$\clock_1:=0$\end{tabular}} (lb)
			(lb) edge[above] node[xshift=2pt,above left=-3pt]{\begin{tabular}{l}$\clock_1=\DA$\\$\clock_1:=0$\end{tabular}} (lbb)
			(lbb) edge[left] node[]{\begin{tabular}{l}$\clock_1=100$\\$\clock_1:=0$\end{tabular}} (lc)
			(lc) edge[left] node[]{\begin{tabular}{l}$\clock_1=\DA$\\$\clock_1:=0$\end{tabular}} (lcc)
			(lcc) edge[below] node[above=-3pt]{\begin{tabular}{l}$\clock_1=100$\\$\clock_1:=0$\end{tabular}} (ld)
			(ld) edge[below] node[above=-3pt]{\begin{tabular}{l}$\clock_1=\DA$\\$\clock_1:=0$\end{tabular}} (ldd)
			(ldd) edge[left] node[]{\begin{tabular}{l}$\clock_1=100$\\$\clock_1:=0$\end{tabular}} (la)
				;
			\node[] (bob) at (0.97*\xfactor,-1*\yfactor+0.4) {\textcolor{blue}{\BOB}};
			\node[] (alice) at (0.55, 0.4) {\textcolor{red}{\ALICE}};
            \path [->,draw,thick] (alice) edge[bend right=20,dotted,color=red] ([xshift=-2, yshift=-8]ld);
			\path [->,draw,thick] (bob) edge[bend right=20,dashed,color=blue] ([xshift=8, yshift=0]la);

			\renewcommand{\xfactor}{1}
			\newcommand{\xshift}{7.9}
			\node[locationa] (lld) at (\xshift+\xfactor, \yfactor) {\,$\LD$\,\nodepart{two}$\clock_2 \leq \DB$};
			\node[locationb, initial] (lldd) at (\xshift, 0) {\,$\LD$''\nodepart{two}$\clock_2 \leq 55$};
			\node[locationa] (llb) at (\xshift+\xfactor, -1*\yfactor) {\,$\LB$\,\nodepart{two}$\clock_2 \leq \DB$};
			\node[locationb] (llbb) at (\xshift+2*\xfactor, 0) {\,$\LB$''\nodepart{two}$\clock_2 \leq 55$};
			\path
				(lld) edge [above left] node[above left=-3pt]{\begin{tabular}{l}$\clock_2=\DB$\\$\clock_2:=0$\end{tabular}} (lldd)
				(lldd) edge [below left] node[below left=-3pt]{\begin{tabular}{l}$\clock_2=55$\\$\clock_2:=0$\end{tabular}} (llb)
				(llb) edge [below right] node[below right=-3pt]{\begin{tabular}{l}$\clock_2=\DB$\\$\clock_2:=0$\end{tabular}} (llbb)
				(llbb) edge [above right] node[above right=-3pt]{\begin{tabular}{l}$\clock_2=55$\\$\clock_2:=0$\end{tabular}} (lld)
				;
		\end{tikzpicture}
		\caption{Train 1}
		\label{fig:example:train1}
	\end{subfigure}
	\hfill
	\begin{subfigure}[b]{0.3\textwidth}
		\begin{tikzpicture}
		\end{tikzpicture}
		\caption{Train 2}
		\label{fig:example:train2}
	\end{subfigure}
	}
	\caption{Train delay scheduling problem:
        {\ALICE} (depicted in \textcolor{red}{dotted red}), located at $\LA$, wants to
        go to station $\LD$.
        {\BOB} (depicted in \textcolor{blue}{dashed blue}), located at $\LB$, wants to
        go to\LongVersion{{} station}~$\LA$.
        By setting the train delays $\DA$ and $\DB$ for train 1
        and 2, make sure that both {\ALICE} and {\BOB} reach their target
        station in minimum total time.
		}
	\label{figure:example}
    \vspace{-1em}
\end{figure}

\begin{example}
\label{ex:intro}
Consider the example in \cref{figure:example}, which depicts a train network consisting
of two trains. Both trains share locations {\LB} and {\LD} (the stations platforms)
while locations $\LA',\LB',\LC',\LD',\LB'',$ and $\LD''$ represent a train travelling (tracks). The travel time for train 1 between any
two stations is 100, and~55 for train 2.
Train 1 stops at stations $\LA$, $\LB$, $\LC,$ and $\LD$, for time
$\DA$ (and train 2 stops for $\DB$ time units at $\LB$ and $\LD$).
Here, the train delays $\DA$ and $\DB$ are parameters and $\clock_1$ and $\clock_2$ are clocks.
Both clocks start at~0 and reset after every transition.
We assume that the trains use different tracks and changing trains 
at the platform of a station can be done in negligible time.

{\ALICE} is starting her journey from {\LA} and would like to go to {\LD}.
{\BOB} is
located at {\LB} and wants to go to {\LA}. Train 1 and/or 2 can be used to
travel, if both the train and the person are at the same location. Initially,
both {\ALICE} and {\BOB} wait for a train, since the initial positions of
train 1 and 2 are respectively {\LC'} and {\LD''}.

We would like to set the train delays $\DA$ and $\DB$ in such
a way that the total time for {\ALICE} and {\BOB} to reach their target
location, \ie the PTA location for which {\ALICE} is at station $\LD$ and {\BOB} is at station
{\LA}, is minimal. The optimal solution is $\DA=25 \land \DB=15$, which leads
to a total time of 405 units\footnote{
    {\ALICE} waits for train 1 to reach {\LA} at time 225, then she hops on and
    exits the train on time 350 at {\LB}. There she can immediately take train
    2 and reach {\LD} at time 405.
    {\BOB} waits for train 2 to reach {\LB} at time 55 and takes this train.
    At time 125 he reaches {\LD} and can immediately hop on train 1. Bob
    reaches {\LA} at time 225.
}. Note that this is neither optimal for {\ALICE} (the fastest would be
$\DA=0 \land \DB=5$), nor optimal for {\BOB} ($\DA=10 \land \DB=0$).
\end{example}

Note that in other instances, the time to reach a goal location may be an
interval, describing the lower- and upper-bound on the time.
This can be achieved in the example by changing the travel time from train 1 to be
between $95$ and $105$, by guarding the outgoing transitions from locations
$\LA'$, $\LB'$, $\LC'$ and $\LD'$ with $95 \leq \clock_1 \leq 105$
(instead of $\clock_1 = 100$).
We focus on the lower-bound \emph{global time}, meaning that we look at the
minimal \emph{total} time passed in the system, which may differ from the clock
values as the clocks can be reset.

\medskip

In this paper we address the following problems:
\begin{itemize}[topsep=0pt]
    \item \emph{minimal-time reachability}: synthesizing \emph{a single}
        parameter valuation for which the goal location can be reached in
        minimal
        (lower-bound) time,
    \item \emph{minimal-time reachability synthesis}: synthesizing all parameter valuations such that
        the time to reach the goal location is minimized, and
    \item \emph{parameter minimization synthesis}: synthesizing all parameter valuations
        such that a particular parameter is minimized and the goal location can still be
        reached (this problem can also address the
        \emph{minimal-time reachability synthesis problem} by adding a parameter to equal
        with the final clock value).
\end{itemize}
For all stated problems we provide algorithms to solve them and empirically
compare them with a set of benchmark experiments for PTAs, obtained from~\cite{Andre18FTSCS}.
Interestingly, compared to standard reachability and synthesis, minimal-time
reachability and synthesis is in general computed faster as fewer states
have to be considered in the exploration.
We also look at the computability and intractability of the problems for PTAs
and L/U-PTAs (PTAs for which each parameter only appears as a lower- or
upper-bound).


\paragraph{Related work.}
\todo{Check this section and possibly improve/extend, possibly cite AM99}


The earliest work on minimal-time reachability was by Courcoubetis and
Yannakis~\cite{CY92}, who first addressed the problem of computing
lower and upper bounds in timed automata. Several algorithms have been
developed since to improve performance~\cite{NTY00,ZNL16,ZNL16SAC}, by \eg{}
using parallelism.
Related problems have been studied, such as minimal-time reachability for
weighted timed automata~\cite{ATP04}, minimal-cost reachability in priced
timed automata~\cite{BF01}, and job scheduling for timed automata~\cite{AAM06}.


Concerning parametric timed automata, to the best of our knowledge, the
minimal-time reachability problem was not tackled in the past.  The
reachability-emptiness problem (``the emptiness of the parameter valuation set
for which a given set of locations is reachable'') is undecidable~\cite{AHV93},
with various settings considered, notably a single clock compared to
parameters~\cite{Miller00} or a single rational-valued or integer-valued
parameter~\cite{Miller00,BBLS15}
(see~\cite{Andre18STTT} for a survey).
Only severely limiting the number of clocks (\eg{}
\cite{AHV93,BO14,BBLS15,AM15}), and often restricting to
integer-valued parameters, can bring some decidability.
Emptiness for the subclass of L/U-PTAs is also decidable~\cite{BLR05}.
Minimizing a parameter can however be considered done in the setting of
upper-bound PTAs (PTAs in which the clocks are only restricted from above): the
exact synthesis of integer valuations for which a location is reachable can be
done~\cite{BlT09}, and therefore the minimum valuation of a parameter can be
obtained.
\todo{say something about rational-value parameters :( (basically, I
cannot do it, although it's very, very closely related to~\cite{SBM14}}
\vb{I have no idea what to do with~\cite{BLR05} and~\cite{SBM14}}\ea{I mention \cite{SBM14} in the conclusion, as it's not strictly a related work. Talking about \cite{BLR05} should be a good idea though.}

\LongVersion{
\paragraph{Overview.}
In \cref{section:preliminaries} we provide preliminaries on TAs and PTAs,
and formalize our problem statements.
\cref{section:theory} addresses the theoretical side of our problems.
\cref{section:minParam} addresses the \emph{parameter minimization synthesis} problem.
In \cref{section:minTime} solve the \emph{minimal-time reachability/synthesis}
problems.
We present our experiments in \cref{section:experiments} and conclude in \cref{section:conclusion}.
}

\section{Preliminaries}\label{section:preliminaries}

\LongVersion{
\subsection{Clocks, parameters and guards}
}

We assume a set~$\Clock = \{ \clock_1, \dots, \clock_\ClockCard \} $ of
\emph{clocks}, \ie{} real-valued variables that evolve at the same rate.  A
clock valuation is\LongVersion{ a function} $\clockval : \Clock \rightarrow
\Rgeqzero$.  We write $\ClocksZero$ for the clock valuation assigning $0$ to
all clocks.  Given $d \in \Rgeqzero$, $\clockval + d$
\ShortVersion{is the valuation}\LongVersion{denotes the valuation} \st{}
$(\clockval + d)(\clock) = \clockval(\clock) + d$, for all $\clock \in \Clock$.
Given $\resets \subseteq \Clock$, we define the \emph{reset} of a
valuation~$\clockval$, denoted by $\reset{\clockval}{\resets}$, as follows:
$\reset{\clockval}{\resets}(\clock) = 0$ if $\clock \in \resets$, and
$\reset{\clockval}{\resets}(\clock)=\clockval(\clock)$ otherwise.

We assume a set~$\Param = \{ \param_1, \dots, \param_\ParamCard \} $ of
\emph{parameters}\LongVersion{, \ie{} unknown constants}.  A parameter {\em
valuation} $\pval$ is\LongVersion{ a function} $\pval : \Param \rightarrow
\grandqplus$.
We denote ${\compOp} \in \{<, \leq, =, \geq, >\}$, ${\compOpLeq}
\in \{<, \leq\}$, and ${\compOpGeq} \in \{>, \geq\}$.  A guard~$\guard$ is a
constraint over $\Clock \cup \Param$ defined by a conjunction of inequalities
of the form $\clock \compOp d$ or $\clock \compOp \param$, with
$\clock\in\Clock$, $d \in \grandn$ and $\param \in \Param$.  Given a guard
$\guard$, we write~$\clockval\models\pval(\guard)$ if the expression obtained
by replacing each clock~$\clock\in\C$ appearing in~$\guard$ by~$\clockval(\clock)$
and each parameter~$\param\in\Param$ appearing in~$\guard$ by~$\pval(\param)$
evaluates to true.

\subsection{Parametric timed automata}

\LongVersion{
Parametric timed automata (PTA) extend timed automata with parameters within guards and invariants in place of integer constants~\cite{AHV93}.
}

\begin{definition}[PTA]\label{def:uPTA}
	A PTA $\A$ is a tuple \mbox{$\A = (\Actions, \Loc, \locinit, \Clock, \Param, \invariant, \Edges)$}, where:
    \begin{ienumeration}
		\item $\Actions$ is a finite set of actions,
		\item $\Loc$ is a finite set of locations,
		\item $\locinit \in \Loc$ is the initial location,
		\item $\Clock$ is a finite set of clocks,
		\item $\Param$ is a finite set of parameters,
        \item $\invariant$ is the invariant, assigning to every $\loc\in \Loc$ a guard $\invariant(\loc)$,
		\item $\Edges$ is a finite set of edges  $\edge = (\loc,\guard,\action,\resets,\loc')$
		where~$\loc,\loc'\in \Loc$ are the source and target locations, $\action \in \Actions$, $\resets\subseteq \Clock$ is a set of clocks to be reset, and $\guard$ is a guard.
    \end{ienumeration}
\end{definition}

Given a parameter valuation~$\pval$ and PTA~$\A$, we denote by
$\valuate{\A}{\pval}$ the non-parametric structure where all occurrences of a
parameter~$\param\in\Param$ have been replaced by~$\pval(\param)$.
Any structure $\valuate{\A}{\pval}$ is also a \emph{timed automaton}. By
assuming a rescaling of the constants (multiplying all constants in
$\valuate{\A}{\pval}$ by their least common denominator), we obtain an
equivalent (integer-valued) TA\LongVersion{, as defined in \cite{AD94}}.

\LongVersion{
\paragraph{L/U-PTAs}
}

\begin{definition}[L/U-PTA]\label{def:LUPTA} 
	An \emph{L/U-PTA} is a PTA where the set of parameters is partitioned into lower-bound parameters and upper-bound parameters,
    \ie{} parameters that appear in guards and invariants in inequalities of
    the form $\param \compOpLeq \clock$, and of the form $\param \compOpGeq
    \clock$ respectively.
\end{definition}

\LongVersion{
\paragraph{Concrete semantics of TAs.}

Let us now recall the concrete semantics of TA.
}

\begin{definition}[Semantics of a TA]
	Given a PTA $\A = (\Actions, \Loc, \locinit, \Clock, \Param, \invariant, \Edges)$,
	and a parameter valuation~\(\pval\),
	the semantics of $\valuate{\A}{\pval}$ is given by the timed transition
    system (TTS) $(\States, \sinit, \flecheRel)$, with:
	\begin{itemize}[topsep=0pt]
		\item $\States = \{ (\loc, \clockval) \in \Loc \times \Rgeqzero^\ClockCard \mid \clockval \models \valuate{\invariant(\loc)}{\pval} \}$,
		\LongVersion{
		\item }$\sinit = (\locinit, \ClocksZero) $,
		\item  $\flecheRel$ consists of the discrete and (continuous) delay transition relations:
		\begin{ienumeration}
			\item discrete transitions: $(\loc,\clockval) \longueflecheRel{\edge} (\loc',\clockval')$, 
				if $(\loc, \clockval) , (\loc',\clockval') \in \States$, and
                there exists $\edge = (\loc,\guard,\action,\resets,\loc') \in
                \Edges$, such that $\clockval'= \reset{\clockval}{\resets}$,
                and $\clockval\models\pval(\guard)$,
			\item delay transitions: $(\loc,\clockval) \longueflecheRel{d} (\loc, \clockval+d)$, with $d \in \Rgeqzero$, if $\forall d' \in [0, d], (\loc, \clockval+d') \in \States$.
		\end{ienumeration}
	\end{itemize}
\end{definition}

Moreover we write $(\loc, \clockval)\longuefleche{(d,\edge)}
(\loc',\clockval')$ for a combination of a delay and discrete transition if
$\exists  \clockval'' :  (\loc,\clockval) \longueflecheRel{d}
(\loc,\clockval'') \longueflecheRel{\edge} (\loc',\clockval')$.

Given a TA~$\valuate{\A}{\pval}$ with concrete semantics $(\States, \sinit,
\flecheRel)$, we refer to the states of~$\States$ as the \emph{concrete states}
of~$\valuate{\A}{\pval}$.  A \emph{run} $\varrun$ of~$\valuate{\A}{\pval}$ is a possibly infinite
alternating sequence of concrete
states of $\valuate{\A}{\pval}$, and pairs of edges and delays, starting from the
initial state $\sinit$ of the form $\state_0, (d_0, \edge_0), \state_1,
\cdots$,
with $i = 0, 1, \dots$, and $d_i \in \Rgeqzero$, $\edge_i \in \Edges$, and
$(\state_i , \edge_i , \state_{i+1}) \in \flecheRel$. The set of all finite
runs over $\valuate{\A}{\pval}$ is denoted by $\Runs(\valuate{\A}{\pval})$.
The \emph{duration} of a finite run $\varrun = \state_0, (d_0, \edge_0), \state_1,
\cdots, \state_i$, is given by $\duration(\varrun) = \sum_{0 \leq j \leq i-1} d_j$.

Given a state~$\state=(\loc, \clockval)$, we say that $\state$ is reachable
in~$\valuate{\A}{\pval}$ if $\state$ is the last state of a run of
$\valuate{\A}{\pval}$.  By extension, we say that $\loc$ is reachable; and by
extension again, given a set~$\LocsTarget$ of locations, we say that
$\LocsTarget$ is reachable if there exists $\loc \in \LocsTarget$ such that
$\loc$ is reachable in~$\valuate{\A}{\pval}$. The set of all finite runs of
$\valuate{\A}{\pval}$ that reach $\LocsTarget$ is denoted by
$\Reach(\valuate{\A}{\pval},\LocsTarget)$.

\paragraph{Minimal reachability.}

\vb{I'm not sure if this is correct for maximum-time, I think it's quite a bit
more involved.. at least for the algorithm}
\ea{how come? There is just the management of infinity (but I may be wrong). I leave it up to you to remove, but then the title should be changed, and all occurrences of ``optimum'' (but I don't really see where the problem could be?)}
\vb{Max-time reachability changes the solution from being a shortest path to a
    longest path, \eg{} how to deal with cycles. Of course this is still doable
    with some adaptations, but it's not trivial, so I removed it and we should
just stick to minimal throughout the paper}

As the minimal time may not be an integer, but also the smallest value larger than an integer\footnote{%
	Consider a TA with a transition guarded by $\clock > 1$ from $\loc_0$
    to~$\loc_1$,
	then the minimal duration of runs reaching~$\loc_1$ is not~1 but slightly more.
},
we define a minimum as either a pair in $\grandqplus \times \{ = , > \}$
or~$\infty$.
The comparison operators function as follows: $(c, =) < \infty$, $(c, >) <
\infty$, and $(c_1, {\succ_1}) < (c_2, \succ_2)$ iff either $c_1 < c_2$ or $c_1
= c_2$, ${\succ_1}$ is ${=}$ and ${\succ_2}$ is ${>}$\footnote{
    When we compute the minimum over a set, we actually calculate its
    infimum and combine the value with either $=$ or $>$ to indicate if the
    value is present in the set.
}.

Given a set of locations~$\LocsTarget$, the minimal time reachability
of~$\LocsTarget$ in~$\valuate{\A}{\pval}$, denoted by
$\MinTReach(\valuate{\A}{\pval}, \LocsTarget) = \min\{\duration(\varrun) \mid
\varrun \in \Reach(\valuate{\A}{\pval},\LocsTarget)\}$,
is the minimal duration over all runs of~$\valuate{\A}{\pval}$ reaching
$\LocsTarget$.
\ea{It would be good to say that this problem (for TAs) is decidable, and probably PSPACE-complete but I don't have the reference! Can you help me?}%
\vb{Probably~\cite{CY92} suffices, as I mentioned in the mail conversation}
\todo{Refer to decidability results for TAs}
\vb{We should check this! \cite{NTY00} states that the problem is PSPACE-hard
    (because reachability is PSPACE-complete, \cite{CY92} shows the problem is
    PSPACE-complete for some constraints on the number of clocks.. I don't
    think we can assume that the complete problem is PSPACE-complete from this
paper.. \cite{ATP04} gives some results, though it
mainly focuses on weighted timed automata.. I'm just going to assume that
\cite{CY92} suffices for now}

By extension, given a PTA, we denote by $\PMinTReach(\A, \LocsTarget)$ the
minimal time reachability of~$\LocsTarget$ over all valuations, \ie{}
$\PMinTReach(\A, \LocsTarget) = \min_{\pval} \MinTReach(\valuate{\A}{\pval},
\LocsTarget) $.  As we will be interested in synthesizing the valuations
leading to the minimal time, let us define $\SynthPMinTReach(\A, \LocsTarget) =
\{ \pval \mid \MinTReach(\valuate{\A}{\pval}, \LocsTarget) = \PMinTReach(\A,
\LocsTarget) \}$.


We will also be interested in minimizing the valuation of a given
parameter~$\param_i$ (without any notion of time) reaching a given location, and
we therefore define $\MinPReach(\A, \param_i, \LocsTarget) = \min_{\pval} \{
\pval(\param_i) \mid \Reach(\valuate{\A}{\pval},\LocsTarget) \neq \emptyset\}$.
Similarly, we will be interested in synthesizing \emph{all} valuations leading 
to the minimal valuation of~$\param_i$ reaching~$\LocsTarget$, so let us define
$\SynthMinPReach(\A, \param_i, \LocsTarget) = \{ \pval \mid
\Reach(\valuate{\A}{\pval}, \LocsTarget) \neq \emptyset \land \pval(\param_i) =
\MinPReach(\A, \param_i, \LocsTarget) \}$.


\subsection{Computation problems}\label{ss:problems}


\defProblem
	{Minimal-time reachability}
	{A PTA~$\A$, a subset $\LocsTarget \subseteq \Loc$ of its locations.}
	{Compute $\PMinTReach(\A, \LocsTarget)$\LongVersion{\ie{} the minimal time
    for which $\LocsTarget$ is reachable for any~$\valuate{\A}{\pval}$}.}

\defProblem
	{Minimal-time reachability synthesis}
	{A PTA~$\A$, a subset $\LocsTarget \subseteq \Loc$ of its locations.}
	{Compute $\SynthPMinTReach(\A, \LocsTarget)$\LongVersion{\ie{} set of all
    parameter valuations~$\pval$ for which $\LocsTarget$ is reachable in
    minimal time in~$\valuate{\A}{\pval}$}.}



Before addressing the problems defined in \cref{ss:problems}, we will address the slightly different problem of minimal-parameter reachability, \ie{} the minimization of a parameter reaching a given location (independently of time).
We will see in~\cref{lemma:paramtotime} that this problem can also give an answer to the minimal-time reachability (synthesis) problem.

\defProblem
	{Minimal-parameter reachability}
	{A PTA~$\A$, a parameter~$\param$, a subset $\LocsTarget \subseteq \Loc$ of
    the locations of~$\A$.}
	{Compute $\MinPReach(\A, \LocsTarget, \param)$\LongVersion{\ie{} the
    minimal valuation for~$\param$ for which $\LocsTarget$ is reachable for
    any~$\valuate{\A}{\pval}$}.}

\defProblem
	{Minimal-parameter reachability synthesis}
	{A PTA~$\A$, a parameter~$\param$, a subset $\LocsTarget \subseteq \Loc$ of
    the locations of~$\A$.}
	{Synthesize $\SynthMinPReach(\A, \LocsTarget, \param)$\LongVersion{\ie{}
    set of all parameter valuations~$\pval$ for which $\LocsTarget$ is
    reachable for a minimal valuation of~$\param$ in~$\valuate{\A}{\pval}$}.}



\subsection{Symbolic semantics}

Let us now recall the symbolic semantics of PTAs (see \eg{} \cite{HRSV02,ACEF09}), that we will use to solve these problems.

\paragraph{Constraints}
We first define operations on constraints.
A linear term over $\Clock \cup \Param$ is of the form $\sum_{1 \leq i \leq \ClockCard} \alpha_i \clock_i + \sum_{1 \leq j \leq \ParamCard} \beta_j \param_j + d$, with
	$\clock_i \in \Clock$,
	$\param_j \in \Param$,
	and
	$\alpha_i, \beta_j, d \in \grandz$.
%
A \emph{constraint}~$\C$ (\ie{} a convex polyhedron) over $\Clock \cup \Param$ is a conjunction of inequalities of the form $\lterm \compOp 0$, where $\lterm$ is a linear term.
%
%
$\KFalse$ denotes the false parameter constraint, \ie{} the constraint over~$\Param$ containing no valuation.

Given a parameter valuation~$\pval$, $\valuate{\C}{\pval}$ denotes the constraint over~$\Clock$ obtained by replacing each parameter~$\param$ in~$\C$ with~$\pval(\param)$.
Likewise, given a clock valuation~$\clockval$, $\valuate{\valuate{\C}{\pval}}{\clockval}$ denotes the expression obtained by replacing each clock~$\clock$ in~$\valuate{\C}{\pval}$ with~$\clockval(\clock)$.
We say that 
$\pval$ \emph{satisfies}~$\C$,
denoted by $\pval \models \C$,
if the set of clock valuations satisfying~$\valuate{\C}{\pval}$ is non-empty.
Given a parameter valuation $\pval$ and a clock valuation $\clockval$, we denote by $\wv{\clockval}{\pval}$ the valuation over $\Clock\cup\Param$ such that 
for all clocks $\clock$, $\valuate{\clock}{\wv{\clockval}{\pval}}=\valuate{\clock}{\clockval}$
and 
for all parameters $\param$, $\valuate{\param}{\wv{\clockval}{\pval}}=\valuate{\param}{\pval}$.
We use the notation $\wv{\clockval}{\pval} \models \C$ to indicate that $\valuate{\valuate{\C}{\pval}}{\clockval}$ evaluates to true.
We say that $\C$ is \emph{satisfiable} if $\exists \clockval, \pval \text{ s.t.\ } \wv{\clockval}{\pval} \models \C$.

We define the \emph{time elapsing} of~$\C$, denoted by $\timelapse{\C}$, as the constraint over $\Clock$ and $\Param$ obtained from~$\C$ by delaying all clocks by an arbitrary amount of time.
That is,
\(\wv{\clockval'}{\pval} \models \timelapse{\C} \text{ iff } \exists \clockval : \Clock \to \grandrplus, \exists d \in \grandrplus \text { s.t. } \wv{\clockval'}{\pval} \models \C \land \clockval' = \clockval + d \text{.}\)
Given $\resets \subseteq \Clock$, we define the \emph{reset} of~$\C$, denoted by $\reset{\C}{\resets}$, as the constraint obtained from~$\C$ by resetting the clocks in~$\resets$, and keeping the other clocks unchanged.
Given a subset~$\Param' \subseteq \Param$ of parameters, we denote by $\project{\C}{\Param'}$ the projection of~$\C$ onto~$\Param'$, \ie{} obtained by eliminating the clock variables and the parameters in $\Param \setminus \Param'$ (\eg{} using Fourier-Motzkin\LongVersion{~\cite{Schrijver86}}).
Therefore, $\projectP{\C}$ denotes the elimination of the clock variables only, \ie{} the projection onto~$\Param$.
Given~$\param$, we denote by $\GetMin(\C, \param)$ the minimum of~$\param$
in a form $(c, \succ)$.  Technically, $\GetMin$ can be implemented using
polyhedral operations as follows: $\project{\C}{\{\param\}}$ is computed, and
then the infimum is extracted; then the operator in $\{ = , > \}$ is inferred
depending whether $\project{\C}{\{\param\}}$ is bounded from below using a
closed or an open constraint.
We extend $\GetMin$ to accommodate clocks, thus $\GetMin(\C,\clock)$ returns
the minimal clock value that $\clock$ can take, while conforming to $\C$.

\LongVersion{
	\paragraph{Symbolic semantics}
}
    A symbolic state is a pair $(\loc, \C)$ where $\loc \in \Loc$ is a
    location, and $\C$ its associated constraint, called \emph{parametric zone}.
%

\begin{definition}[Symbolic semantics]\label{def:PTA:symbolic}
	Given a PTA $\A = (\Actions, \Loc, \locinit, \Clock, \Param, \invariant, \Edges)$,
	the symbolic semantics of~$\A$ is defined by the labelled transition system
    called the \emph{parametric zone graph}
	$ \PZG = ( \Edges, \SymbState, \symbstateinit, \symbtrans )$, with
	\begin{itemize}[topsep=0pt]
		\item $\SymbState = \{ (\loc, \C) \mid \C \subseteq \invariant(\loc) \}$, 
			\LongVersion{
		\item }$\symbstateinit = \big(\locinit, \timelapse{(\bigwedge_{1 \leq i\leq\ClockCard}\clock_i=0)} \land \invariant(\loc_0) \big)$,
				and
		\item $\big((\loc, \C), \edge, (\loc', \C')\big) \in \symbtrans $ if
            $\edge = (\loc,\guard,\action,\resets,\loc')$ and \\
            \mbox{$\C' = \timelapse{\big(\reset{(\C \land \guard)}{\resets}\land
        \invariant(\loc')\big )} \land \invariant(\loc')$}
			with $\C'$ satisfiable.
	\end{itemize}

\end{definition}

That is, in the parametric zone graph, nodes are symbolic states, and arcs are labeled by \emph{edges} of the original PTA.
\LongVersion{%

}Given $\symbstate = (\loc, \C)$, if $\big((\loc, \C), \edge, (\loc', \C')\big) \in \symbtrans $, we write $\Succ(\symbstate, \edge) = (\loc', \C')$.
By extension, we write $\Succ(\symbstate)$ for $\cup_{\edge \in \Edges} \Succ(\symbstate, \edge)$.
\LongVersion{%
	Given $\big(\symbstate, \edge, \symbstate'\big) \in \symbtrans $, we also write $\symbstate \Fleche{\edge} \symbstate'$.
}%
\ShortVersion{Well-known results (see \cite{HRSV02}) connect the concrete and the symbolic semantics.}

\LongVersion{

Given a concrete (respectively symbolic) run $(\locinit, \clockval^0)
\longuefleche{d_0, \edge_0} (\loc_1, \clockval^1) \longuefleche {d_1, \edge_1}
\cdots \longuefleche{d_{m-1}, \edge_{m-1,}} (\loc_m, \clockval^m)$
(respectively  $(\loc_0, \C_0) \Fleche{\edge_0} (\loc_1, \C_1) \Fleche
{\edge_1} \cdots \Fleche{\edge_{m-1}} (\loc_m, \C_m)$),
we define the corresponding \emph{discrete sequence} as
$\loc_0  \Fleche{\edge_0} \loc_1 \Fleche {\edge_1} \cdots \Fleche{\edge_{m-1}} \loc_m $.
Two runs (concrete or symbolic) are said to be \emph{equivalent} if their associated discrete sequences are equal.

The following results (proved in, \eg{} \cite{HRSV02}) connect the concrete and the symbolic semantics.

\begin{lemma}\label{prop:run-equivalence}
	Let $\A$ be a PTA, and let $\varrun$ be a run of~$\A$ reaching $(\loc, \C)$.
	Let $\pval$ be a parameter valuation.
	There exists an equivalent run in~$\valuate{\A}{\pval}$ iff $\pval \models \projectP{\C}$.
\end{lemma}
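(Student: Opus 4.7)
The plan is to prove this by induction on the length $m$ of the symbolic run $\varrun = (\loc_0, \C_0) \Fleche{\edge_0} \cdots \Fleche{\edge_{m-1}} (\loc_m, \C_m)$, strengthening the statement to a pointwise correspondence: for every such symbolic run and every valuation $\wv{\clockval}{\pval}$, we have $\wv{\clockval}{\pval} \models \C_m$ iff there exists a concrete run in~$\valuate{\A}{\pval}$ ending in $(\loc_m, \clockval)$ whose discrete sequence coincides with that of~$\varrun$. Once this stronger claim is established, the lemma follows: if $\pval \models \projectP{\C}$ then by definition of projection some $\clockval$ satisfies $\wv{\clockval}{\pval} \models \C$, yielding an equivalent concrete run; conversely, any equivalent concrete run ends in some $(\loc, \clockval)$ with $\wv{\clockval}{\pval} \models \C$, so $\pval \models \projectP{\C}$.

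For the base case $m=0$, the initial symbolic state is $(\locinit, \C_0)$ with $\C_0 = \timelapse{(\bigwedge_i \clock_i = 0)} \land \invariant(\locinit)$, and the claim unfolds directly from the definition of $\timelapse{}$ and the initial concrete state $(\locinit, \ClocksZero)$ followed by an arbitrary delay satisfying the invariant. For the inductive step, suppose the claim holds at $(\loc_m, \C_m)$ and consider a symbolic transition through edge $\edge_m = (\loc_m, \guard, \action, \resets, \loc_{m+1})$ producing $\C_{m+1} = \timelapse{\bigl(\reset{(\C_m \land \guard)}{\resets} \land \invariant(\loc_{m+1})\bigr)} \land \invariant(\loc_{m+1})$. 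The technical core is to show that $\wv{\clockval'}{\pval} \models \C_{m+1}$ iff there exist $\clockval$ and $d \in \Rgeqzero$ with $\wv{\clockval}{\pval} \models \C_m$ and a combined concrete transition $(\loc_m, \clockval) \longuefleche{(d, \edge_m)} (\loc_{m+1}, \clockval')$. This decomposes the symbolic formula layer by layer: conjunction with $\guard$ corresponds to the guard check on~$\clockval$, the $\reset{\cdot}{\resets}$ operator mirrors the clock reset, the $\timelapse{}$ operator mirrors the subsequent delay~$d$, and the final intersection with $\invariant(\loc_{m+1})$ enforces the target invariant throughout the delay.

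The main obstacle is bookkeeping: at each layer we must verify that the parametric symbolic operation, when instantiated at~$\pval$, coincides exactly with its non-parametric counterpart on clock valuations, \ie{} that $\valuate{(\timelapse{\C})}{\pval} = \timelapse{(\valuate{\C}{\pval})}$ and $\valuate{(\reset{\C}{\resets})}{\pval} = \reset{(\valuate{\C}{\pval})}{\resets}$. These commutation properties are standard but must be invoked carefully because $\wv{\clockval}{\pval}$ mixes clock and parameter components; nothing deeper is required. Combining the single-step correspondence with the inductive hypothesis gives the strengthened claim, from which the lemma follows as described.
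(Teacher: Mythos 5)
Your proposal is correct in outline, but it is worth knowing that the paper does not prove this lemma at all: its ``proof'' is a one-line citation to Propositions~3.17 and~3.18 of \cite{HRSV02}, of which this statement is a standard consequence. What you have written is essentially a reconstruction of the argument underlying that cited result --- the induction on the length of the symbolic run with the strengthened pointwise invariant ($\wv{\clockval}{\pval} \models \C_m$ iff a corresponding concrete state is reachable along the same discrete sequence) is exactly the canonical way to establish the concrete/symbolic correspondence, and the commutation of valuation with $\timelapse{\cdot}$ and $\reset{\cdot}{\resets}$ is indeed the only technical content. One point where your bookkeeping needs a little care: the symbolic successor applies $\timelapse{\cdot}$ \emph{after} the reset at the target location, whereas the paper's combined concrete step $(\loc,\clockval) \longuefleche{(d,\edge)} (\loc',\clockval')$ places the delay \emph{before} the discrete transition, and a concrete run as defined ends immediately after a discrete step. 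So a valuation $\clockval'$ in $\C_{m+1}$ that is only reached after a trailing delay is not literally the endpoint of a run; you should either phrase the invariant as ``$(\loc_m,\clockval)$ is reachable'' (delays included) or note that the guard check on the \emph{next} edge absorbs the pending delay. This does not affect the lemma itself, since ``there exists an equivalent run'' only requires some witness along the discrete sequence, which your argument does produce. In short: the paper buys brevity by outsourcing to \cite{HRSV02}; you buy self-containment at the cost of this (entirely manageable) delay-placement bookkeeping.
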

\begin{proof}
    From \cite[Propositions 3.17 and 3.18]{HRSV02}.
\end{proof}

\begin{lemma}\label{prop:run-nonparam-param}
	Let $\A$ be a PTA, let $\pval$ be a parameter valuation. 
	Let $\varrun$ be a run of~$\valuate{\A}{\pval}$ reaching $(\loc, \clockval)$.
	\LongVersion{
	
	}Then there exists an equivalent symbolic run in~$\A$ reaching $(\loc, \C)$, with $\pval \models \projectP{\C}$.
\end{lemma}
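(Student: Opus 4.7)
The plan is to proceed by induction on the length~$m$ of the concrete run~$\varrun$, simultaneously constructing the equivalent symbolic run step by step and maintaining the invariant that~$\pval \models \projectP{\C_i}$ for every symbolic state $(\loc_i,\C_i)$ appearing along the way, together with the stronger property $\wv{\clockval^i}{\pval} \models \C_i$ which gives the projection claim for free.

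For the base case ($m=0$), the concrete run consists of the single state $(\locinit, \ClocksZero)$, which requires $\ClocksZero \models \valuate{\invariant(\locinit)}{\pval}$ by definition of the TTS. The symbolic counterpart is $\symbstateinit = \bigl(\locinit, \timelapse{(\bigwedge_i \clock_i = 0)} \land \invariant(\locinit)\bigr)$, and $\wv{\ClocksZero}{\pval} \models \symbstateinit$ follows from taking $d=0$ in the definition of time elapsing, so $\pval \models \projectP{\symbstateinit}$.

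For the inductive step, assume we have built an equivalent symbolic prefix reaching $(\loc, \C_m)$ with $\wv{\clockval^m}{\pval} \models \C_m$, and that the concrete run continues with $(\loc, \clockval^m) \longueflecheRel{d} (\loc, \clockval^m + d) \longueflecheRel{\edge} (\loc', \clockval^{m+1})$ where $\edge = (\loc, \guard, \action, \resets, \loc')$. I would then define $(\loc',\C_{m+1}) = \Succ((\loc,\C_m),\edge)$ and show that $\wv{\clockval^{m+1}}{\pval} \models \C_{m+1}$ by unpacking the definition $\C_{m+1} = \timelapse{\bigl(\reset{(\C_m \land \guard)}{\resets} \land \invariant(\loc')\bigr)} \land \invariant(\loc')$: the concrete witnesses give $\wv{\clockval^m + d}{\pval} \models \C_m \land \guard$ (since all intermediate valuations respect the invariant of~$\loc$ by the delay-transition rule and $\clockval^m + d$ enables~$\edge$), then the reset ensures $\wv{\clockval^{m+1}}{\pval} \models \reset{(\C_m \land \guard)}{\resets}$, and a trivial zero-delay time-elapsing together with $\clockval^{m+1} \models \valuate{\invariant(\loc')}{\pval}$ yields membership in~$\C_{m+1}$. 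This in particular makes $\C_{m+1}$ satisfiable, so the symbolic transition indeed exists, and $\pval \models \projectP{\C_{m+1}}$ as required. Equivalence of the two runs holds by construction since the sequence of edges is identical.

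The main obstacle I anticipate is purely bookkeeping around the time-elapsing operator: one must be careful that the clock valuations at intermediate delay points all satisfy both the invariant of~$\loc$ (to justify $\wv{\clockval^m + d'}{\pval} \models \C_m$ for every $d' \in [0,d]$) and the invariant of~$\loc'$ after the reset. This is essentially the mirror image of the argument in~\cref{prop:run-equivalence}, and once the bookkeeping for time elapsing and reset is cleanly handled, the induction goes through directly. \qed
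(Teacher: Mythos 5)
Your proof plan is correct, but it differs from the paper in a somewhat degenerate way: the paper does not prove this lemma at all, it simply cites \cite[Proposition~3.18]{HRSV02}, treating it as a known correspondence between the concrete and symbolic semantics. What you propose is, in essence, the standard direct proof of that cited proposition: an induction on the length of the concrete run that constructs the symbolic run edge by edge while carrying the stronger invariant $\wv{\clockval^i}{\pval} \models \C_i$, from which $\pval \models \projectP{\C_i}$ follows because the clock valuation witnesses non-emptiness of $\valuate{\C_i}{\pval}$. This buys self-containedness at the cost of redoing textbook material. One bookkeeping step deserves sharpening: to conclude $\wv{\clockval^m + d}{\pval} \models \C_m$ you invoke only that the intermediate valuations satisfy $\invariant(\loc)$, but that alone is not sufficient --- you also need that every reachable symbolic constraint has the form $\timelapse{Z} \land \invariant(\loc)$ and is therefore closed under time elapsing for as long as the invariant keeps holding. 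This closure property is immediate from \cref{def:PTA:symbolic} (both $\Cinit$ and every successor constraint are built as a time elapsing intersected with the target location's invariant), so the cleanest fix is to record it as a second conjunct of your induction hypothesis. With that amendment the induction closes exactly as you describe, and your argument is a faithful reconstruction of the proof the paper delegates to the reference.
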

\begin{proof}
	From \cite[Proposition 3.18]{HRSV02}.
\end{proof}

}

\LongVersion{
\subsection{Reachability synthesis}

Our upcoming algorithm \MinParamSynth{} shares some similarities with the
reachability-synthesis algorithm called \Synth{}: this procedure takes as input a PTA~$\A$ and a set of target locations~$\LocsTarget$, and attempts to synthesize all parameter valuations~$\pval$ for which~$\LocsTarget$ is reachable in~$\valuate{\A}{\pval}$.
\Synth{} was formalized in \eg{} \cite{JLR15} and is a procedure that may not terminate, but that computes an exact result (sound and complete) if it terminates.
\Synth{} traverses the \emph{parametric zone graph} of~$\A$.

}

\section{Computability and intractability}\label{section:theory}

\subsection{Minimal-time reachability}


The following result 
is a consequence of
a monotonicity property of L/U-PTAs~\cite{HRSV02}. 
We can safely replace parameters with some constants in order to compute the solution
to the minimal-time reachability problem, which reduces to the minimal-time
reachability in a TA, which is PSPACE-complete~\cite{CY92}.
\ShortVersion{All proofs are given in~\cite{ABPP19report}\todo{replace with arXiv version!}.}

\ea{argh, I don't have the reference! Vincent?}
\vb{Added CY92, perhaps change later}


\begin{proposition}[minimal-time reachability for L/U-PTAs]\label{proposition:reachability:MinTime-L/U}
	The minimal-time reachability problem for L/U-PTAs is PSPACE-complete.
\end{proposition}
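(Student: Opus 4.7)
The plan is to exploit a monotonicity property of L/U-PTAs. Specifically, for L/U-PTAs one has the following consequence of~\cite{HRSV02}: if $\pval$ and $\pval'$ are two parameter valuations such that $\pval'(\param) \leq \pval(\param)$ for every lower-bound parameter and $\pval'(\param) \geq \pval(\param)$ for every upper-bound parameter, then every run of $\valuate{\A}{\pval}$ is also a run of $\valuate{\A}{\pval'}$ with the same sequence of delays. In particular, $\duration$ is preserved, so $\MinTReach(\valuate{\A}{\pval'},\LocsTarget) \leq \MinTReach(\valuate{\A}{\pval},\LocsTarget)$.

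For membership in PSPACE, I would therefore argue that the infimum over parameter valuations is attained at the ``extreme'' valuation $\pvalzeroinf$ assigning $0$ to every lower-bound parameter and a sufficiently large constant $M$ to every upper-bound parameter, where $M$ may be chosen to be strictly greater than the largest integer constant appearing in $\A$ (so that constraints of the form $\clock \compOpLeq \param$ become vacuous). By the monotonicity recalled above,
\[
\PMinTReach(\A, \LocsTarget) \;=\; \MinTReach(\valuate{\A}{\pvalzeroinf}, \LocsTarget).
\]
Since $\valuate{\A}{\pvalzeroinf}$ is (equivalent to, after the standard rescaling) a plain timed automaton whose size is linear in that of~$\A$, and minimal-time reachability for TAs is in PSPACE by~\cite{CY92}, the L/U-PTA problem is in PSPACE as well.

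For PSPACE-hardness, I would observe that any TA is trivially an L/U-PTA (with an empty set of parameters), and that minimal-time reachability for TAs is already PSPACE-hard by~\cite{CY92}; hence the lower bound transfers immediately.

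The only delicate point I expect is the justification of the choice of the constant~$M$: I need to make sure that using a finite constant in place of ``$\infty$'' does not perturb the minimum. Because each upper-bound parameter~$\param$ appears only in inequalities of the form $\clock \compOpGeq \param$, taking $M$ strictly larger than every integer constant occurring in~$\A$ ensures that the constraint is implied by the clock invariants/guards already present along any reachable run, so no run is lost by choosing a finite value; this makes the reduction fully effective and polynomial.
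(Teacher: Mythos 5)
Your overall strategy is the same as the paper's: use the L/U monotonicity property of~\cite{HRSV02} to reduce the problem to minimal-time reachability in a single timed automaton obtained from the extremal valuation, then invoke~\cite{CY92} for membership, and get hardness for free because every TA is an L/U-PTA with no parameters. The structure is sound, but there is a genuine gap in your justification of the constant~$M$. You claim that taking $M$ strictly greater than the largest integer constant occurring in~$\A$ makes every conjunct $\clock \compOpLeq \param$ (for an upper-bound parameter~$\param$) vacuous, because it would be ``implied by the clock invariants/guards already present along any reachable run''. This is false: clock values along the minimal-time run are not bounded by the syntactic constants of the automaton. For example, let $\clock_2$ never be reset, force the run to take twice in a row a transition guarded by $\clock_1 = 5$ that resets~$\clock_1$, and guard the final transition into $\LocsTarget$ by $\clock_2 \leq \param$ with $\param$ an upper-bound parameter. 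The largest constant is~$5$, yet $\clock_2 = 10$ when the last guard is tested, so with $M = 6$ the target becomes unreachable under your valuation and the claimed equality $\PMinTReach(\A,\LocsTarget) = \MinTReach(\valuate{\A}{\pvalzeroinf},\LocsTarget)$ fails in the direction that requires exhibiting an actual (finite) valuation attaining the infimum.

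The paper sidesteps this by working with the TA $\Azeroinf$ in which the upper-bound conjuncts are simply \emph{deleted}, computing its minimal duration~$d$, and only then instantiating every upper-bound parameter with~$d$ itself: along a run of duration~$d$ every clock value is at most~$d$ (clocks start at~$0$ and resets only decrease them), so no guard $\clock \leq d$ is violated. Your argument is repaired by replacing ``largest integer constant in~$\A$'' with any upper bound on the minimal duration of the relaxed automaton (e.g., the value~$d$ itself, or the exponential region-graph bound, which keeps the reduction within PSPACE). A further small slip: in your final paragraph you state that upper-bound parameters appear in inequalities of the form $\clock \compOpGeq \param$; by \cref{def:LUPTA} they appear in inequalities of the form $\clock \compOpLeq \param$, which is the form you (correctly) used earlier when arguing vacuity.
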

\LongVersion{
\begin{proof}
	We show that the problem reduces to the minimal-time reachability problem for TAs.
	
	Let $\A$ be an L/U-PTA.
	Let $\pvalzeroinf$ denote the valuation assigning every lower-bound parameter (resp.\ upper-bound parameter) in the guards of~$\A$ to~0 (resp.~$\infty$).
	Let $\Azeroinf = \valuate{\A}{\pvalzeroinf}$ denote the structure obtained as follows: any occurrence of a lower-bound parameter is replaced with~0, and any occurrence of a conjunct $\clock \compOpLeq \param$ (where $\param$ is necessarily a upper-bound parameter) is deleted, \ie{} replaced with~$\CTrue$.
	($\clock \compOpLeq \infty$ is always satisfiable, therefore equivalent to~$\CTrue$.)
	Let us show that the minimal-time reachability problem for the L/U-PTA~$\A$ is equivalent to the minimal-time reachability problem for the TA~$\Azeroinf$.\ea{was this problem defined for TAs?}
	
	
	\begin{itemize}[topsep=0pt]
		\item[$\Rightarrow$]
			Let $d$ be the solution of the minimal-time reachability problem for~$\A$, \ie{} $\PMinTReach(\A, \LocsTarget)$.
			Let us show that $\LocsTarget$ is reachable in~$d$ time units in~$\Azeroinf$.
			
			Recall that $\PMinTReach(\A, \LocsTarget) = \min_{\pval} \MinTReach(\valuate{\A}{\pval}, \LocsTarget) $.
			Let $\pval$ be a\footnote{This valuation is not necessarily
            unique.} valuation for which the minimal time is obtained.
			Let $\varrun$ be a run of~$\valuate{\A}{\pval}$ for which this
            minimal time is obtained.
			
			Let us recall the following monotonicity result for L/U-PTAs.
			Basically, any run of a valuation is also a run of a ``larger'' valuation (\ie{} smaller lower-bound parameters and larger upper-bound parameters).
		
			\begin{lemma}[\cite{HRSV02}]\label{lemma:HRSV02:prop4.2}
				Let~$\A$ be an L/U-PTA and~$\pval$ be a parameter valuation.
				Let $\pval'$ be a valuation such that
				for each upper-bound parameter~$\param^+$, $\pval'(\param^+) \geq \pval(\param^+)$
				and
				for each lower-bound parameter~$\param^-$, $\pval'(\param^-) \leq \pval(\param^-)$.
				Then any run of~$\valuate{\A}{\pval}$ is a run of $\valuate{\A}{\pval'}$. 
			\end{lemma}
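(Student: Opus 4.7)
The plan is to prove the monotonicity lemma by an induction on the length of the run, where the key building block is a pointwise satisfaction preservation for L/U-guards under parameter ``widening''.

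First, I would establish the \emph{guard-preservation} step. By definition of an L/U-PTA, every atomic conjunct in a guard or in an invariant has one of the forms: (a)~$\clock \compOp d$ with $d \in \grandn$ (no parameter involved), (b)~$\param^- \compOpLeq \clock$ for a lower-bound parameter~$\param^-$, or (c)~$\clock \compOpLeq \param^+$ for an upper-bound parameter~$\param^+$ (after rewriting $\clock \compOpGeq \param^-$ as $\param^- \compOpLeq \clock$ and symmetrically). For an arbitrary clock valuation~$\clockval$ I would show, case by case, that $\clockval \models \valuate{\cdot}{\pval}$ implies $\clockval \models \valuate{\cdot}{\pval'}$: case~(a) is trivial since no parameter occurs; case~(b) follows from $\pval'(\param^-) \leq \pval(\param^-) \leq \clockval(\clock)$; and case~(c) follows symmetrically from $\clockval(\clock) \leq \pval(\param^+) \leq \pval'(\param^+)$ (strict inequalities are handled exactly the same way). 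Taking the conjunction over all atomic conjuncts yields $\clockval \models \valuate{\guard}{\pval} \Longrightarrow \clockval \models \valuate{\guard}{\pval'}$ for every guard $\guard$ and every invariant $\invariant(\loc)$.

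Next, I would perform an induction on the length~$n$ of a run $\varrun = \state_0, (d_0, \edge_0), \state_1, \dots, \state_n$ of $\valuate{\A}{\pval}$. The base case $n=0$ holds because $\sinit = (\locinit, \ClocksZero)$ does not depend on parameter values, and $\ClocksZero \models \valuate{\invariant(\locinit)}{\pval}$ implies $\ClocksZero \models \valuate{\invariant(\locinit)}{\pval'}$ by the guard-preservation step. For the inductive step, assume the prefix up to $\state_i = (\loc, \clockval)$ is a run of $\valuate{\A}{\pval'}$ and consider the next combined transition $(\loc, \clockval) \longueflecheRel{d_i} (\loc, \clockval + d_i) \longueflecheRel{\edge_i} (\loc', \clockval')$. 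I would verify: the delay transition remains legal under $\pval'$ because for every $d' \in [0, d_i]$, $\clockval + d' \models \valuate{\invariant(\loc)}{\pval}$ implies $\clockval + d' \models \valuate{\invariant(\loc)}{\pval'}$; the discrete transition remains legal because the guard of~$\edge_i$ is preserved by the same step, and the reset operation $\reset{\cdot}{\resets}$ does not touch parameters, so the target clock valuation $\clockval'$ satisfies $\valuate{\invariant(\loc')}{\pval'}$ as well.

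The main obstacle---really the only subtle point---is the bookkeeping around strict versus non-strict inequalities, in particular checking that an open constraint such as $\param^- < \clock$ is indeed preserved when $\pval'(\param^-) \leq \pval(\param^-)$ (which works because $\pval(\param^-) < \clockval(\clock)$ implies a fortiori $\pval'(\param^-) < \clockval(\clock)$), and likewise for $\clock < \param^+$. Once this routine case analysis is in place, the inductive argument is immediate, yielding that the entire run $\varrun$ is a run of $\valuate{\A}{\pval'}$. Since the result is due to~\cite{HRSV02}, I would cite the original source for the statement and simply include this short self-contained argument to keep the exposition closed.
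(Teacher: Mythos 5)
Your proposal is correct. Note, however, that the paper does not prove this lemma at all: it is stated as a recalled result and attributed directly to~\cite{HRSV02} (it is Proposition~4.2 there), so there is no in-paper proof to compare against. Your self-contained argument is the standard one and is sound: the crucial observation is that in an L/U-PTA every parametric conjunct is either a lower bound $\param^- \compOpLeq \clock$ or an upper bound $\clock \compOpLeq \param^+$ (equality conjuncts $\clock = \param$ are excluded by \cref{def:LUPTA}), so decreasing lower-bound parameters and increasing upper-bound parameters can only weaken every guard and invariant pointwise; the induction on the run length then goes through because delay steps only need invariant preservation at every intermediate valuation, discrete steps only need guard and target-invariant preservation, and resets are parameter-free. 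Your handling of strict versus non-strict inequalities is also right, since $\pval'(\param^-) \leq \pval(\param^-) < \clockval(\clock)$ still yields a strict inequality. The only purpose your write-up serves beyond the paper's citation is self-containment, which is a reasonable choice.
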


			Therefore, $\varrun$ is a run of~$\Azeroinf$, and therefore $\LocsTarget$ is reachable in~$d$ time units in~$\Azeroinf$.
			
		\item[$\Leftarrow$]
			Let $d$ be the solution of the minimal-time reachability problem for~$\Azeroinf$, \ie{} $\MinTReach(\Azeroinf, \LocsTarget)$,
			and let us show there exists a parameter valuation~$\pval$ such that $\LocsTarget$ is reachable in~$d$ time units in~$\valuate{\A}{\pval}$.
			
			Let~$\varrun$ be a run of $\Azeroinf$ for which $\LocsTarget$ is reachable in~$d$ time units.
			The result could follow immediately from \cref{lemma:HRSV02:prop4.2}---if only assigning $0$ and~$\infty$ to parameters was a proper parameter valuation.
			From~\cite{HRSV02,BlT09}, if a location is reachable in the TA obtained by valuating lower-bound parameters with~0 and upper-bound parameters with~$\infty$, then there exists a sufficiently large constant~$C$ such that this run exists in $\valuate{\A}{\pval}$ such that $\pval$ assigns 0 to lower-bound and~$C$ to upper-bound parameters.
			Here, we can trivially pick~$d$, as any clock constraint $\clock \leq d$ will be satisfied for a run of duration~$d$.
			Let $\pval$ assign 0 to lower-bound and~$d$ to upper-bound parameters.
			Then, $\varrun$ is a run of $\valuate{\A}{\pval}$.
			Therefore, $\LocsTarget$ is reachable in~$d$ time units in~$\valuate{\A}{\pval}$, which concludes the proof.
	\end{itemize}
    The result finally follows from the fact that minimal-time reachability
    problem for TAs is PSPACE-complete~\cite{CY92}.
    \ea{Vincent, can you add the reference discussed?}
\vb{Added CY92, perhaps change later}
	\qed
\end{proof}

}

Computing the minimal time for which a location is reached (\cref{proposition:reachability:MinTime-L/U})
does not mean that we are able to compute exactly all valuations for which this location is reachable in minimal time.
In fact, we show that it is not possible in a formalism for which the emptiness of the intersection is decidable---which notably rules out its representation as a finite union of polyhedra.
The proof idea is that representing it in such a formalism would contradict the undecidability of the emptiness problem for (normal) PTAs.

\begin{proposition}[intractability of minimal-time reachability synthesis for L/U-PTAs]\label{proposition:intractability-PMinTime-LUPTAs}
	The solution to the minimal-time reachability synthesis problem for L/U-PTAs cannot be represented in a formalism for which the emptiness of the intersection is decidable.
\end{proposition}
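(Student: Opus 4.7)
The plan is to argue by contradiction, reducing from the reachability-emptiness problem for general PTAs, which is known to be undecidable~\cite{AHV93}. Assume, toward a contradiction, that for every L/U-PTA $\A$ the set $\SynthPMinTReach(\A, \LocsTarget)$ can be represented in some formalism $\mathcal{F}$ in which the emptiness of intersection (in particular with any polyhedron over parameters) is decidable.

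Given an arbitrary PTA $\A$ with parameters $\Param$ and target $\LocsTarget$, I would build an L/U-PTA $\A'$ via the standard splitting: each parameter $\param$ is replaced by two fresh parameters $\param^{-}$ (lower-bound) and $\param^{+}$ (upper-bound), so that every occurrence of $\param$ in a lower-bound guard $\param \compOpLeq \clock$ uses $\param^{-}$ and every occurrence in an upper-bound guard $\clock \compOpLeq \param$ uses $\param^{+}$. On the ``diagonal'' set of valuations $D = \bigwedge_{\param \in \Param}(\param^{-} = \param^{+})$, the L/U-PTA $\A'$ behaves exactly like $\A$ at the corresponding valuation. To make reachability in $\A$ equivalent to non-emptiness of the minimal-time synthesis set of $\A'$ intersected with $D$, I would further insert a small gadget, using a non-reset clock and an auxiliary target location, that pins $\PMinTReach(\A', \LocsTarget)$ to exactly the minimum attainable along $D$. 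The gadget relies on the monotonicity of L/U-PTAs~\cite{HRSV02} to guarantee that off-diagonal valuations cannot beat the diagonal minimum.

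With this reduction in hand, computing a representation of $\SynthPMinTReach(\A', \LocsTarget)$ in $\mathcal{F}$ and then deciding emptiness of its intersection with the polyhedron $D$ would yield a decision procedure for reachability-emptiness in the arbitrary PTA $\A$, contradicting undecidability. The main obstacle is precisely the gadget construction: without it, by L/U monotonicity the infimum in $\A'$ is attained when all lower-bound parameters are $0$ and all upper-bound parameters are arbitrarily large, so the synthesis set trivially avoids~$D$ and the reduction collapses. Most of the technical effort therefore lies in designing the gadget so that the minimum time is forced onto diagonal valuations, in verifying that the reduction is effective (the construction of $\A'$ from $\A$ is computable), and in checking that $\A'$ remains syntactically an L/U-PTA throughout.
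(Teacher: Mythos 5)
Your overall strategy --- reduce from an undecidable emptiness problem for general PTAs via the standard lower/upper splitting of each parameter, then test emptiness of the intersection of the synthesized set with the diagonal $\bigwedge_i \param_i^l = \param_i^u$ --- matches the paper's, and you correctly diagnose the central obstacle: by L/U monotonicity the minimum is attained at lower-bound parameters $0$ and upper-bound parameters arbitrarily large, so the synthesis set need not meet the diagonal. However, there is a genuine gap: the gadget is the entire technical content of the argument, and you do not construct it; you only postulate one that ``pins $\PMinTReach(\A',\LocsTarget)$ to the minimum attainable along $D$.'' That specification is unjustified and is, moreover, not what a working gadget achieves --- forcing the global minimum onto the diagonal is precisely what monotonicity prevents in an L/U-PTA, and you give no mechanism for circumventing it.

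The paper's gadget works differently, and the difference is essential. It adds a never-reset clock $\clock$, a fresh \emph{upper-bound} parameter $\param^u$, a new initial location $\locinit'$ with an ``escape'' transition to a new target $\locfinal'$ guarded by $\clock = 2 \land \clock < \param^u$, and a transition from $\locfinal$ to $\locfinal'$ guarded by $\clock \leq 1$ (with $\locfinal$ urgent). Rather than pinning the minimum to the diagonal, this creates a dichotomy: on the region $\param^u < 2$ the escape route is disabled, so reaching $\locfinal'$ there forces a run through $\A$ of duration at most $1$. Consequently $\K \cap (\param^u < 2 \land \bigwedge_i \param_i^l = \param_i^u) \neq \emptyset$ iff $\locfinal$ is reachable in $\A$ \emph{within one time unit} for some valuation, and the contradiction is with the undecidability of \emph{time-bounded} reachability-emptiness \cite{ALM18}, not plain reachability-emptiness \cite{AHV93} as in your sketch. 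This is not cosmetic: when reaching times are unbounded there is no finite escape time separating the reachable from the unreachable case, so a reduction from plain reachability-emptiness along these lines does not go through. Repairing your proof requires both the explicit gadget and the switch to the time-bounded undecidability result.
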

\LongVersion{
\begin{proof}[by \emph{reductio ad absurdum}]
	We use a reasoning sharing similarities with \cite{BlT09,JLR15} and with \cref{proposition:intractability-MinTime-PTAs,proposition:intractability-MinTime-L/U}.
	Assume the solution to the minimal-time reachability synthesis problem for L/U-PTAs can be represented in a formalism for which the emptiness of the intersection is decidable.
	
	Assume an arbitrary PTA~$\A$ with an initial location~$\locinit$; assume a given target location~$\locfinal$.
	
	Add a new clock~$\clock$ not used in~$\A$ (and never reset); add a new upper-bound parameter~$\param^u$.
	Augment~$\A$ as follows: add a new initial location~$\locinit'$, and a transition guarded with~$\clock = 0$ from~$\locinit'$ to~$\locinit$.
	Add a transition guarded by $\clock = 2 \land \clock < \param^u$ from~$\locinit'$ to a new location~$\locfinal'$.
	Add a transition guarded by $\clock \leq 1$ from~$\locfinal$ to~$\locfinal'$.
 	Make~$\locfinal$ urgent\footnote{%
		An urgent location is a location where time cannot elapse (depicted in dotted yellow in our figures, and which can be encoded using an extra clock).
	}.
	The construction is given in \cref{figure:intractability:LUPTAs}.
	Also, turn~$\A$ into an L/U-PTA as in the proof of \cref{proposition:intractability-MinTime-L/U}:
		for any parameter~$\param'$, any guard of the form
			$\clock \compOpLeq \param'$, $\clock \compOpGeq \param'$, $\clock = \param'$
		with
			$\clock \compOpLeq \param'^u$, $\clock \compOpGeq \param'^l$, $\param'^l \leq \clock \leq \param'^u$, respectively.
	The obtained PTA~$\A'$ made of the parameters set $\{ \param'^l, \param'^u \mid \param' \in \Param \} \cup \{ \param^u \}$ is an L/U-PTA.
	
	Clearly, $\locfinal'$ is reachable in~$\A'$ in time~2 by taking the transition from~$\locinit'$ to~$\locfinal'$, for any valuation~$\pval$ such that $\pval(\param^u) > 2$.
	In addition, it is reachable in~$\A'$ in time~$\leq 1$ for all valuations of~$\param^u$ iff there exists a parameter valuation for which $\locfinal$ is reachable in~$\A$ in $\leq 1$ time unit.
	
	Now, assume the solution to the minimal-time reachability synthesis problem for L/U-PTAs can be represented in a formalism for which the emptiness of the intersection is decidable.
	Let $\K$ be this solution in~$\A'$ for $\LocsTarget = \{ \locfinal' \}$.
	Then, there exists a parameter valuation reaching~$\locfinal$ in~$\A$ in time~$\leq 1$ iff the intersection of $\K$ with $\param^u < 2 \land \bigwedge_{i} \param_i^l = \param_i^u $ is non-empty.
	But since reachability emptiness is undecidable for PTAs over bounded time (typically in $\leq 1$ time unit) \cite[Theorem~17]{ALM18}, this leads to a contradiction.
	Therefore, $\K$ cannot be represented in a formalism for which the emptiness of the intersection is decidable.
	\qed
\end{proof}

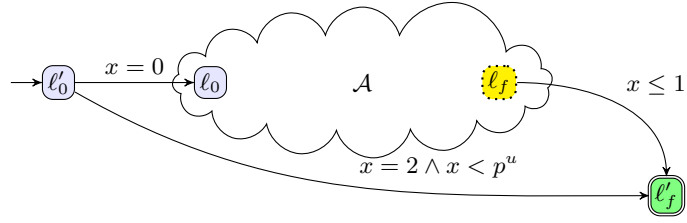
\begin{figure}[tb]
	{\centering
	\begin{tikzpicture}[->, >=stealth', auto, node distance=2cm, thin]

		\node[location] (l0) at (-2, 0) {$\locinit$};
		\node[location, urgent] (lf) at (+1.8, 0) {$\locfinal$};
		\node[cloud, cloud puffs=15.7, cloud ignores aspect, minimum width=5cm, minimum height=2cm, align=center, draw] (cloud) at (0cm, 0cm) {$\A$};

		\node[location, initial] (l0') at (-4, 0) {$\locinit'$};
		\node[location, final] (lf') at (+4, -1.5) {$\locfinal'$};

		\path
			(l0') edge[] node{$\clock = 0$} (l0)
			(lf) edge[out=0, in=90] node{$\clock \leq 1$} (lf')
			(l0') edge[out=-30, in=180] node{$\clock = 2 \land \clock < \param^u$} (lf')
			;

	\end{tikzpicture}
	
	}
	\caption{Intractability of minimal-parameter reachability synthesis for
    L/U-PTAs.}
	\label{figure:intractability:LUPTAs}
    \vspace{-1em}
\end{figure}

}

\subsection{Minimal-parameter reachability}

For the full class of PTAs, we will see that these problems are clearly out of reach: if it was possible to compute the solution to the minimal-parameter reachability or minimal-parameter reachability synthesis\LongVersion{ problem}, then it would be possible to answer the reachability emptiness problem---which is undecidable in most settings~\cite{Andre18STTT}.\ea{I'm not entirely sure of this reasoning because, as Vincent noted, we could assume as input that \emph{we know that the target is reachable for some valuations}; then would it really be impossible to compute the minimum…? The issue is that I'm familiar with proving undecidability of emptiness problems, but not really with synthesis (I'm not really sure how to show that a computation is impossible).}

\ea{but what is the decision problem?! there are two computation problems
(``find the minimal parameter'' and ``find the valuations minimizing that parameter'') but I don't see the decision problem. Let's say I'll only address computation problems.}

We first show that an algorithm for the minimal-parameter synthesis problem can
be used to solve the minimal-time synthesis problem,
\ie{} the minimal-parameter synthesis problem is harder than the minimal-time synthesis problem.

\ea{slightly modified the proof to talk about problems instead of algorithms (that are not yet defined). Please reread.}

\begin{lemma}[minimal-time from minimal-parameter synthesis]\label{lemma:paramtotime}
    An algorithm that solves the minimal-parameter synthesis problem can be
    used to solve the minimal-time synthesis problem by extending the PTA.
\end{lemma}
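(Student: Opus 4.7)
The plan is to reduce the minimal-time (synthesis) problem to the minimal-parameter (synthesis) problem by augmenting the PTA with a fresh clock and a fresh parameter that together encode the elapsed global time.

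Concretely, given a PTA $\A = (\Actions, \Loc, \locinit, \Clock, \Param, \invariant, \Edges)$ and a target set $\LocsTarget \subseteq \Loc$, I would construct an extended PTA $\A'$ as follows: add a fresh clock $\globaltime \notin \Clock$ that is never reset on any edge (and does not appear in any existing guard or invariant); add a fresh parameter $\globaltimeparam \notin \Param$; add a fresh location $\locfinal' \notin \Loc$; and, for each $\loc \in \LocsTarget$, add an edge from $\loc$ to $\locfinal'$ guarded by $\globaltime = \globaltimeparam$ with empty reset set. The invariant of $\locfinal'$ is $\CTrue$. Note that since $\globaltime$ is never reset along any run, its value in any reachable concrete state equals the total elapsed duration of the run.

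The correspondence I would establish next is: for every parameter valuation $\pval'$ of $\A'$, a run of $\valuate{\A'}{\pval'}$ reaches $\locfinal'$ if and only if there is a run of $\valuate{\A}{\pval}$ reaching some $\loc \in \LocsTarget$ at exactly time $\pval'(\globaltimeparam)$, where $\pval$ denotes the restriction of $\pval'$ to $\Param$. This follows directly: the final edge to $\locfinal'$ fires precisely when $\globaltime = \pval'(\globaltimeparam)$, and the $\globaltime$-value at that moment coincides with the duration of the prefix run in $\A$. Conversely, any run of $\valuate{\A}{\pval}$ reaching $\LocsTarget$ at some time $t$ extends to a run of $\valuate{\A'}{\pval'}$ reaching $\locfinal'$ whenever $\pval'(\globaltimeparam) = t$ and $\pval'$ agrees with $\pval$ on $\Param$.

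From this equivalence I can conclude the two needed equalities: $\MinPReach(\A', \{\locfinal'\}, \globaltimeparam) = \PMinTReach(\A, \LocsTarget)$, and the projection of $\SynthMinPReach(\A', \{\locfinal'\}, \globaltimeparam)$ onto $\Param$ equals $\SynthPMinTReach(\A, \LocsTarget)$. Therefore, running any algorithm solving the minimal-parameter (synthesis) problem on the pair $(\A', \globaltimeparam, \{\locfinal'\})$ and projecting the output onto the original parameters solves the minimal-time (synthesis) problem on $(\A, \LocsTarget)$.

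The only delicate step is handling the open/closed nature of the minimum: if the infimum time is not attained (i.e.\ of the form $(c, >)$), the guard $\globaltime = \globaltimeparam$ nevertheless behaves correctly because $\globaltimeparam$ ranges over $\grandqplus$ and $\GetMin$ already reports whether the projected constraint on $\globaltimeparam$ is bounded below by a closed or open inequality. I expect no other obstacle, since the construction adds one clock, one parameter, one location and at most $|\LocsTarget|$ edges, so $\A'$ has size linear in that of $\A$ and is immediately syntactically a PTA.
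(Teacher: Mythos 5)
Your proposal is correct and takes essentially the same approach as the paper: introduce a never-reset global clock $\globaltime$ and a fresh parameter $\globaltimeparam$, force $\globaltime = \globaltimeparam$ at the moment the target is reached, and then minimize the parameter instead of the time. The only (immaterial) difference is that the paper conjoins this guard onto the existing edges entering $\LocsTarget$, whereas you route the target locations through a fresh sink location $\locfinal'$; since the optimal run fires your new edge with zero delay, the computed minimum and the synthesized valuation set coincide with the paper's.
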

\begin{proof}
Assume we are given an arbitrary PTA $\A$, a set of target locations $\LocsTarget$, and
a global clock $\globaltime$ that never resets.
We construct the PTA $\A'$ from $\A$ by adding a new parameter
$\globaltimeparam$, and for every edge $(\loc,\guard,\action,\resets,\loc')$ in
$\A'$ such that $\loc'\in\LocsTarget$, we replace $\guard$ by $\guard \land
\globaltime = \globaltimeparam$.
Note that when a target location from $\LocsTarget$ is reached, we have that
$\globaltime = \globaltimeparam$, hence by minimizing $\globaltimeparam$ we
also minimize $\globaltime$.  Thus, by solving
$\SynthMinPReach(\A',\LocsTarget,\globaltimeparam)$, we effectively solve
$\SynthPMinTReach(\A,\LocsTarget)$.
\end{proof}

The following result states that synthesis of the minimal-value of the parameter is intractable for PTAs.
\begin{proposition}[intractability of minimal-parameter reachability for PTAs]\label{proposition:intractability-MinTime-PTAs}
	The solution to the minimal-parameter reachability for PTAs cannot 
be computed in general.
\end{proposition}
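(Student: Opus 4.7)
The plan is to reduce from the reachability-emptiness problem for PTAs, which is undecidable~\cite{AHV93,Andre18STTT}: given a PTA $\A$ and target set $\LocsTarget$, deciding whether there exists a valuation $\pval$ such that $\LocsTarget$ is reachable in $\valuate{\A}{\pval}$ is undecidable. I would derive intractability of $\MinPReach$ by showing that any algorithm computing it would immediately decide emptiness.

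The construction is minimal: given an arbitrary PTA $\A = (\Actions, \Loc, \locinit, \Clock, \Param, \invariant, \Edges)$ and target $\LocsTarget \subseteq \Loc$, I would build $\A'$ by adding a single fresh parameter $\param_0 \notin \Param$ that does not occur in any guard or invariant of $\A'$. Formally, $\A'$ has parameter set $\Param \cup \{\param_0\}$ and is otherwise identical to $\A$.

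A straightforward induction on run length (or an appeal to the structural definition of the semantics) shows that, for every valuation $\pval'$ of $\A'$, a run reaches $\LocsTarget$ in $\valuate{\A'}{\pval'}$ iff the same run reaches $\LocsTarget$ in $\valuate{\A}{\pval}$, where $\pval$ denotes the restriction of $\pval'$ to $\Param$ (the value $\pval'(\param_0)$ is irrelevant to satisfaction of any guard or invariant). As a consequence, if $\LocsTarget$ is reachable in $\valuate{\A}{\pval}$ for at least one $\pval$, then extending this valuation by $\pval'(\param_0) = 0$ yields $\MinPReach(\A', \param_0, \LocsTarget) = (0,=)$; otherwise the minimum is taken over the empty set and equals $\infty$.

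Hence, if $\MinPReach$ were computable in general, we could decide reachability-emptiness for any PTA $\A$ by constructing $\A'$ and testing whether the returned value equals $\infty$, a contradiction. The only (minor) obstacle is ensuring that the output representation lets us effectively distinguish $\infty$ from a finite value $(c, \succ) \in \grandqplus \times \{=,>\}$, which is immediate from the representation of minima introduced earlier in the excerpt. Note that, by \cref{lemma:paramtotime}, this intractability does not automatically lift to minimal-time reachability, which is why that case is handled separately.
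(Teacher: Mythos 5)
Your proof is correct for the problem exactly as stated, but it takes a genuinely different---and strictly weaker---route than the paper's. You add a vacuous parameter $\param_0$ and observe that $\MinPReach(\A',\param_0,\LocsTarget)$ equals $(0,=)$ when some valuation reaches $\LocsTarget$ and $\infty$ otherwise, so any algorithm computing the minimum decides reachability-emptiness. This is sound given that the paper defines a minimum to be either a pair in $\grandqplus\times\{=,>\}$ or $\infty$ (the infimum over the empty set), but the entire reduction then hinges on the algorithm's ability to report the $\infty$ versus non-$\infty$ distinction; in other words, your construction is essentially a restatement of the emptiness problem, and it collapses if one restricts the problem to instances where the target is promised to be reachable for some valuation---a natural refinement that the authors themselves raise as a concern. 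The paper's construction is immune to this objection: it augments $\A$ with a fresh clock $\clock$ and fresh parameter $\param$, adds a bypass transition guarded by $\clock = 1 \land \clock = \param$ from the new initial location to the new target $\locfinal'$ (so $\locfinal'$ is \emph{always} reachable, with $\param = 1$), and appends after $\locfinal$ a gadget that resets $\clock$ and then requires $\clock = 0 \land \clock = \param$ (so $\locfinal'$ is reachable with $\param = 0$ iff $\locfinal$ is reachable in $\A$ for some valuation). The minimum is therefore always one of the two finite values $0$ or $1$, and deciding which one it is already decides emptiness. The paper's approach thus buys intractability even under the reachability promise and without relying on the $\infty$ convention; yours buys a shorter argument but establishes only the literal, unrestricted statement.
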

\LongVersion{
\begin{proof}[by \emph{reductio ad absurdum}]
	Assume the solution to the minimal-parameter reachability for PTAs can be
	computed.
	
	Assume an arbitrary PTA~$\A$ with an initial location~$\locinit$; assume a given target location~$\locfinal$.
	Add a new clock~$\clock$ and a new parameter~$\param$ not used in~$\A$.
	Augment~$\A$ as follows: add a new initial location~$\locinit'$, and a transition guarded with~$\clock = 0$ from~$\locinit'$ to~$\locinit$.
 	Add an unguarded transition from~$\locfinal$ to a new location~$\locfinal''$ resetting~$\clock$, and then a transition guarded by~$\clock = 0 \land \clock = \param$ from~$\locfinal''$ to a new location~$\locfinal'$.
	Add an unguarded transition from~$\locinit'$ to~$\locfinal'$ guarded with $\clock = 1 \land \clock = \param$.
	Let~$\A'$ denote this augmented PTA.
	The construction is given in \cref{figure:intractability:PTAs}.
	
	Clearly, $\locfinal'$ is reachable in~$\A'$ if $\param = 1$.
	In addition, it is reachable in~$\A'$ for $\param = 0$ iff there exists a parameter valuation for which $\locfinal$ is reachable in~$\A$.
	
	Now, assume the solution to the minimal-parameter reachability for~$\A'$ and~$\param$ can be
		computed.
	Let~$\K$ denote this solution (which will typically be $\param = 0$ or $\param = 1$ depending on whether $\locfinal$ is reachable in~$\A$).
	Then, there exists a parameter valuation reaching~$\locfinal$ in~$\A$ iff $\K$ is equal to $\param = 0$.
	But since reachability emptiness is undecidable for PTAs~\cite{AHV93}, this leads to a contradiction.
	Therefore, $\K$ cannot be
	computed in general.
		\hfill{}\qed
\end{proof}
}
\ShortVersion{%
\begin{proof}[sketch]
	By showing that testing equality of ``$\param = 0$'' against the solution
    of the minimal-parameter reachability problem for the PTA in \cref{figure:intractability:PTAs} and $\locfinal'$ is equivalent to solving reachability emptiness of~$\locfinal$ in~$\A$---which is undecidable~\cite{AHV93}.
	Therefore, the solution cannot be computed in general.
\end{proof}
}

\begin{figure}[tb]
	{\centering
	\begin{tikzpicture}[->, >=stealth', auto, node distance=2cm, thin]

		\node[location] (l0) at (-2, 0) {$\locinit$};
		\node[location] (lf) at (+1.8, 0) {$\locfinal$};
		\node[cloud, cloud puffs=15.7, cloud ignores aspect, minimum width=5cm, minimum height=2cm, align=center, draw] (cloud) at (0cm, 0cm) {$\A$};

		\node[location, initial] (l0') at (-4, 0) {$\locinit'$};
		\node[location] (lf'') at (+4, 0) {$\locfinal''$};
		\node[location, final] (lf') at (+4, -1.5) {$\locfinal'$};

		\path
			(l0') edge[] node{$\clock = 0$} (l0)
            (lf) edge node[xshift=5pt]{$\clock := 0$} (lf'')
			(lf'') edge node{$\clock = 0 \land \clock = \param$} (lf')
			(l0') edge[out=-35, in=180] node{$\clock = 1 \land \clock = \param$} (lf')
			;

	\end{tikzpicture}
	
	}
	\caption{Intractability of minimal-parameter reachability for PTAs}
	\label{figure:intractability:PTAs}
    \vspace{-1em}
\end{figure}
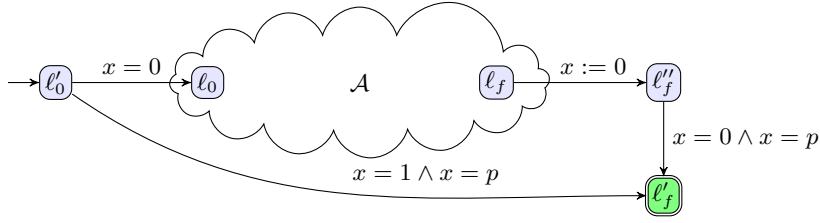

The intractability of minimal-parameter reachability synthesis for PTAs will be implied by the upcoming \cref{proposition:intractability-MinTime-L/U} in a more restricted setting.

\LongVersion{
Still, we prove it below with a slightly different condition from \cref{proposition:intractability-MinTime-L/U}.

\begin{proposition}[intractability of minimal-parameter reachability synthesis for PTAs]\label{proposition:intractability-PMinTime-PTAs}
	The solution to the minimal-parameter reachability synthesis for PTAs cannot be represented in a formalism for which the emptiness of the intersection is decidable.
\end{proposition}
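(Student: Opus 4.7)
The plan is to adapt the \emph{reductio ad absurdum} argument of \cref{proposition:intractability-PMinTime-LUPTAs} by reusing the gadget from \cref{proposition:intractability-MinTime-PTAs} (\cref{figure:intractability:PTAs}) in place of the L/U-specific one. Starting from an arbitrary PTA $\A$ with target location $\locfinal$, I augment it with a fresh clock $\clock$ and a fresh parameter $\param$ to obtain the PTA $\A'$ with new target $\locfinal'$, exactly as in that figure. The key dichotomy the gadget provides is already established there: $\locfinal'$ is unconditionally reachable in $\A'$ via the direct edge $\locinit' \to \locfinal'$ which forces $\param = 1$; and $\locfinal'$ is reachable in~$\A'$ with $\param = 0$ iff $\locfinal$ is reachable in $\A$ for some parameter valuation (since the alternate path requires travelling through $\A$ to $\locfinal$, resetting $\clock$, and then discharging the guard $\clock = 0 \land \clock = \param$).

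Next, I would analyse the synthesis set $\K = \SynthMinPReach(\A', \param, \{\locfinal'\})$. If $\locfinal$ is reachable in $\A$, then $\MinPReach(\A', \param, \{\locfinal'\}) = 0$ and $\K$ is made of valuations forcing $\param = 0$ whose projection onto the parameters of~$\A$ witnesses reachability of~$\locfinal$ in~$\A$; otherwise the minimum is~$1$ and $\K$ consists entirely of valuations with $\param = 1$. In either case, $\K \cap \{ \param = 0 \}$ is non-empty iff $\locfinal$ is reachable in~$\A$ for some parameter valuation.

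Finally, I would suppose for contradiction that $\K$ can be represented in a formalism for which the emptiness of intersection is decidable. The constraint $\param = 0$ is a single linear equality and thus trivially representable; intersecting $\K$ with it and testing for emptiness would decide reachability-emptiness of $\locfinal$ in the arbitrary PTA~$\A$, contradicting the undecidability of this problem~\cite{AHV93}. Therefore $\K$ admits no such representation.

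The main obstacle is ruling out any unintended path in $\A'$ that could push $\param$ down to~$0$ when $\locfinal$ is unreachable in~$\A$; this is taken care of by the gadget's structure, where the only edge to~$\locfinal'$ compatible with $\param = 0$ is the one after $\locfinal''$ (requiring $\clock$ to be reset, hence requiring $\locfinal$ to have been visited in~$\A$), and the only other edge to $\locfinal'$ explicitly forces $\param = 1$. This guarantees the clean case split needed to turn the emptiness-of-intersection test into a decision procedure for PTA reachability-emptiness.
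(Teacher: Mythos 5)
Your proposal is correct and follows essentially the same route as the paper's own proof: it reuses the gadget of \cref{figure:intractability:PTAs}, establishes the same dichotomy (minimum $\param=0$ iff $\locfinal$ is reachable in~$\A$, else $\param=1$), and derives the contradiction by intersecting the synthesized set with $\param=0$ to decide reachability-emptiness, which is undecidable by~\cite{AHV93}. Your closing remark ruling out unintended paths to $\locfinal'$ is a detail the paper leaves implicit, but the argument is otherwise identical.
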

\begin{proof}[by \emph{reductio ad absurdum}]
	Assume the solution to the minimal-parameter reachability synthesis for PTAs can be represented in a formalism for which the emptiness of the intersection is decidable.
	
	We use a reasoning similar to that of the proof of~\cref{proposition:intractability-MinTime-PTAs}.
	Assume an arbitrary PTA~$\A$, and augment it into~$\A'$ as in \cref{figure:intractability:PTAs}.
	
	Again, $\locfinal'$ is reachable in~$\A'$ if $\param = 1$.
	In addition, it is reachable in~$\A'$ for $\param = 0$ iff there exists a parameter valuation for which $\locfinal$ is reachable in~$\A$.
	
	Now, assume the solution to the minimal-parameter reachability synthesis for~$\A'$ and~$\param$ can be represented in a formalism for which the emptiness of the intersection is decidable.
	Let~$\K$ denote this solution: note that this solution will either be $\param = 1$ (with all other parameters unconstrained) if $\locfinal$ is unreachable in~$\A$, or a constraint of the form $\param = 0 \land \K'$, for some constraint~$\K'$ over the other parameters.
	Then, there exists a parameter valuation reaching~$\locfinal$ in~$\A$ iff $\K \land \param = 0$ is not empty.
	But since reachability emptiness is undecidable for PTAs~\cite{AHV93}, this leads to a contradiction.
	Therefore, $\K$ cannot be represented in a formalism for which the emptiness of the intersection is decidable.
		\hfill{}\qed
\end{proof}

}

\LongVersion{
Let us now address two subclasses for which the reachability-emptiness problem is decidable:
the class of L/U-PTAs (\cref{ss:minP:LU}), and the class of 1-clock PTAs (\cref{ss:1clock}).
}

\paragraph{Intractability of the synthesis for L/U-PTAs.}\label{ss:minP:LU}

\ea{I feel like it'll be both decidable \emph{and computable} for L/U-PTAs (but I may be wrong!!)}\ea{i'm wrong for full synthesis, see my private document in \texttt{UPTA/}}

\ea{\textbf{it's even not decidable for finding the minimum} :(
Because if it's a L-parameter: that's trivial (0).
But if it's a U-parameter, it amounts at synthesis for U-PTAs with a single parameter---a problem I'm totally unable to solve although I'd say it's very closely related to~\cite{SBM14}
}


The following result states that synthesis is intractable for L/U-PTAs.
In particular, this rules out the possibility to represent the result using a finite union of polyhedra.


\begin{proposition}[intractability of minimal-parameter reachability synthesis for L/U-PTAs]\label{proposition:intractability-MinTime-L/U}
	The solution to the minimal-parameter reachability synthesis for L/U-PTAs cannot always be represented in a formalism for which the emptiness of the intersection is decidable and for which the minimization of a variable is computable.
\end{proposition}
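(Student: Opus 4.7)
The plan is to proceed by \emph{reductio ad absurdum} in exactly the same spirit as the proof of \cref{proposition:intractability-PMinTime-LUPTAs}. I assume that the minimal-parameter reachability synthesis set for L/U-PTAs can always be represented in a formalism supporting two operations: (a) deciding emptiness of the intersection with a constraint, and (b) computing the minimum of a variable. The goal is to reduce a known undecidable problem---namely, reachability emptiness over bounded time for PTAs \cite[Theorem~17]{ALM18}---to this representation, thereby deriving a contradiction.

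The construction I plan to use starts from an arbitrary PTA~$\A$ with initial location~$\locinit$ and target location~$\locfinal$, and first transforms it into an L/U-PTA by splitting each parameter~$\param'$ into a lower-bound copy~$\param'^l$ and an upper-bound copy~$\param'^u$, exactly as in the proof of \cref{proposition:intractability-MinTime-L/U}. On top of this, I add a fresh clock~$\clock$ (never reset), a fresh upper-bound parameter~$q$, a new initial location~$\locinit'$ connected to the original~$\locinit$ by a transition guarded with $\clock = 0$, a new target location~$\locfinal'$, a transition $\locfinal \to \locfinal'$ guarded by $\clock \leq 1$ (with~$\locfinal$ made urgent as in the other proofs), and a ``backup'' transition $\locinit' \to \locfinal'$ guarded by $\clock = 1 \land \clock \leq q$. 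Call the resulting L/U-PTA~$\A'$, with its parameter set $\{\param'^l,\param'^u \mid \param' \in \Param\} \cup \{q\}$.

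The key observation I intend to exploit is that the infimum of~$q$ in the synthesis set $\K = \SynthMinPReach(\A', q, \{\locfinal'\})$ encodes bounded-time reachability of~$\locfinal$ in~$\A$. If there exists a valuation of~$\A$ making~$\locfinal$ reachable within one time unit, then the path through~$\A$ is enabled in~$\A'$, takes the $\clock \leq 1$ transition to~$\locfinal'$ without constraining~$q$, and thus allows $q = 0$, yielding $\GetMin(\K, q) = (0, {=})$. Otherwise only the backup path is available, forcing $q \geq 1$, so $\GetMin(\K, q) = (1, {=})$. Hence, by using the two assumed operations on~$\K$ (equivalently, by testing whether $\K \cap (q < 1)$ is empty, or by computing the minimum of~$q$), one could decide bounded-time reachability emptiness in~$\A$, contradicting \cite[Theorem~17]{ALM18}.

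The main subtle point---which will likely be the bulk of the argument---is ensuring that the L/U-splitting of~$\A$'s parameters does not spuriously enable the $\A$-path for L/U valuations that do not correspond to any valuation of the original~$\A$. As in the proof of \cref{proposition:intractability-PMinTime-LUPTAs}, I plan to handle this by further intersecting~$\K$ with the ``diagonal'' constraint $\bigwedge_{\param' \in \Param} \param'^l = \param'^u$ (permitted by the assumed decidability of intersection emptiness) before performing the minimization, so that what is ultimately tested is exactly the bounded-time reachability of~$\locfinal$ in~$\A$ itself.
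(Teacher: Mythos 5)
Your proof is correct, but it takes a genuinely different route from the paper's. The paper dispatches this proposition in one line, by combining \cref{lemma:paramtotime} (minimal-parameter synthesis can simulate minimal-time synthesis) with \cref{proposition:intractability-PMinTime-LUPTAs} (minimal-time reachability synthesis for L/U-PTAs is not representable in a formalism with decidable intersection-emptiness): a representable solution to the parameter problem would yield a representable solution to the time problem. You instead inline and adapt the gadget of \cref{proposition:intractability-PMinTime-LUPTAs}: your fresh upper-bound parameter $q$ with backup guard $\clock = 1 \land \clock \leq q$ plays the role of the time threshold there, so that the minimum of~$q$ over the (diagonal) synthesized set is $0$ iff $\locfinal$ is reachable in~$\A$ within one time unit, and the intersection with $\bigwedge_{\param'} \param'^l = \param'^u$ --- which you correctly identify as the crux --- filters out the runs spuriously enabled by the L/U-splitting. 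Your route buys two things: it is self-contained, and it genuinely stays inside the L/U class, whereas the reduction of \cref{lemma:paramtotime} adds the guard $\globaltime = \globaltimeparam$, in which the new parameter occurs as both a lower and an upper bound, so applying that lemma to an L/U-PTA requires an extra (easy but unstated) adaptation; moreover, since a single emptiness test of $\K \cap (q < 1 \land \bigwedge_{\param'} \param'^l = \param'^u)$ already decides bounded-time reachability, you do not actually need the computable-minimization hypothesis, so you prove a marginally stronger statement. The price is that you must redo the whole construction and re-argue the diagonal step that the paper's cited proposition already packages. One phrasing to tighten: in your second-to-last paragraph, ``otherwise only the backup path is available'' is not literally true before the diagonal restriction (the L/U-split may enable the $\A$-path for non-diagonal valuations, making the global minimum of $q$ equal to $0$ anyway); the argument should be stated from the outset in terms of the diagonal-restricted test, as your final paragraph does.
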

\begin{proof}
	From \cref{lemma:paramtotime,proposition:intractability-PMinTime-LUPTAs}.
		\hfill{}\qed
\end{proof}


The minimal-parameter reachability problem remains open for L/U-PTAs (see \cref{section:conclusion}).\LongVersion{

}
Despite these negative results, we will define 
procedures that address not only the class of L/U-PTAs, but in fact the class of full PTAs.
Of course, these procedures are not guaranteed to terminate.\ea{or if we force it, then it is an underapproximation?}

\section{Minimal parameter reachability synthesis}\label{section:minParam}


\LongVersion{
\subsection{The algorithm}

}

We give $\MinParamSynth(\A, \LocsTarget, \param)$ in \cref{algo:MinParamSynth}.
It maintains a set~$\Waiting$ of waiting symbolic states, a set~$\Passed$ of
passed states, a current optimum~$\Kopt$ and the associated optimal
valuations~$\K$.
While $\Waiting$ is not empty, a state is picked in line~\ref{alg:EFS:pick}.
If it is a target state (\ie{} $\loc \in \LocsTarget$) then the projection of
its constraint onto~$\param$ is computed, and the minimum is inferred
(line~\ref{alg:EFS:getmin}).
If that projection improves the known optimum, then the associated parameter
valuations~$\K$ are completely replaced by the one obtained from the current
state (\ie{} the projection of~$\C$ onto~$\Param$).
Otherwise, if $\project{\C}{\{\param\}}$ is equal to the known optimum
(line~\ref{alg:EFS:eq}), then we
add (using disjunction) the associated valuations.
Finally, if the current state is not a target state and has not been visited
before, then we compute its successors and add them to~$\Waiting$ in
lines~\ref{alg:EFS:sucstart}~and~\ref{alg:EFS:sucend}.

Note that if $\Waiting$ is implemented as a FIFO list with ``pick'' the first
element, then this algorithm is a classical BFS procedure.

Also note that if we replace
lines~\ref{alg:EFS:getmin}-\ref{algo:MinParamSynth:update-K} with the statement
$\K \assign \K \lor \projectP{\C}$ (\ie{} adding the parameter valuations to
$\K$ every time the algorithm reaches a target location), we obtain the
standard synthesis algorithm {\Synth} from \eg{}~\cite{JLR15}, that synthesizes all
parameter valuations for which a set of locations is reachable.

\begin{algorithm}[!htb]
	\Input{A PTA $\A$ with symbolic initial state $\symbstateinit = (\locinit,
    \Cinit)$, a set of target locations $\LocsTarget$, a parameter~$\param$.}
	\Output{Constraint $\K$ over the parameters\LongVersion{ solution of
    $\SynthMinPReach(\A, \LocsTarget, \param)$}.}

	\LongVersion{\BlankLine}
	
	\LongVersion{\tcp{Initialization}}
	$\Waiting \assign \{ \symbstateinit \} $ \tcp*{waiting set}
	
	$\Passed \assign \emptyset$ \tcp*{passed set}
	
	$\Kopt \assign \infty$ \tcp*{current optimum}
	
	$\K \assign \KFalse$ \tcp*{current optimum valuations}
	
	\LongVersion{\tcp{Main loop}}
	\While{$\Waiting \neq \emptyset$}{
	
        Pick $\symbstate = (\loc, \C)$ from~$\Waiting$\label{alg:EFS:pick}
		
		$\Waiting \assign \Waiting \setminus \{ \symbstate \} $

		$\Passed \assign \Passed \cup \{ \symbstate \}$
        \nllabel{algo:MinParamSynth:add-to-P}
		
		\uIf(\tcp*[f]{$\symbstate$ is a target state}){$\loc \in \LocsTarget$}{%
			
			$\sopt \assign \GetMin(\C , \param)$
            \tcp*{compute local optimum}\label{alg:EFS:getmin}
			
			\uIf(\tcp*[f]{the optimum is strictly better}){$\sopt < \Kopt$}{
				$\Kopt \assign \sopt$ \tcp*{we found a new best optimum: replace it} 
				
				$\K \assign \projectP{\C}$ \tcp*{completely replace the found
                valuations} \nllabel{algo:MinParamSynth:replace-K}
			}
			
			\ElseIf(\tcp*[f]{the optimum is equal to the one known}){$\sopt =
            \Kopt$}{\label{alg:EFS:eq}
				
				$\K \assign \K \lor \projectP{\C}$ \tcp*{add the found
                valuations} \nllabel{algo:MinParamSynth:update-K}
			}
			
		}
		
		\Else(\tcp*[f]{otherwise explore successors}){%
            \For{\textbf{\emph{each}} $\symbstate' \in
                \Succ(\symbstate)$}{\label{alg:EFS:sucstart}%
				
				\LongVersion{\tcp{add to waiting list only if not seen before}}
				\lIf{$\symbstate' \notin \Waiting \land \symbstate' \notin
                    \Passed$\nllabel{algo:MinParamSynth:already}}{ 
                    $\Waiting \assign \Waiting \cup \{ \symbstate'
                    \}$\label{alg:EFS:sucend}
				}
			}
		}

	}
	
	\Return $\K$
	\caption{$\MinParamSynth(\A, \LocsTarget, \param)$}
	\label{algo:MinParamSynth}
\end{algorithm}
\begin{figure}[t!]
	{\centering
	\begin{tikzpicture}[->, >=stealth', auto, xscale=2.5, yscale=1.5]
		\node[location, initial] (l1) at (0, 0) {$\loc_1$};
		\node[location] (l2) at (2, 0) {$\loc_2$};
		\node[location, final] (l3) at (2, 1) {$\loc_3$};

        \path
        (l1) edge[bend left] node[xshift=15pt,align=right]{\begin{tabular}{@{} l @{} c} &
        $\clock < \param_1$ \\ $\land$ & $ \clock = 2$\end{tabular}} (l3)
        (l1) edge[] node[xshift=-12pt,align=right]{\begin{tabular}{@{} l @{} c}
        & $\clock < \param_2$ \\ $ \land$ & $ \clock = 1$\end{tabular}}
                node[xshift=-12pt,below]{$\clock := 0$}(l2)
        (l2) edge[bend left] node[align=right]{\begin{tabular}{@{} l @{} c} &
        $\clock = \param_1$ \\ $\land$ & $\clock = 2$\\ $\land$ & $\,\clock >
\param_2$\end{tabular}} (l3)
        (l2) edge[bend right, right] node[align=right]{\begin{tabular}{@{} l
        @{} c} & $\clock = \param_1$ \\ $\land$ & $ \clock = 2$ \\ $\land$ & $
\,\clock = \param_3$\end{tabular}} (l3)
			;
	\end{tikzpicture}
	
	}

	\caption{PTA exemplifying \cref{algo:MinParamSynth}.}
	\label{figure:MinParamSynth:example}
    \vspace{-1em}
\end{figure}

\begin{example}
	Consider the PTA~$\A$ in \cref{figure:MinParamSynth:example}, and
    run $\MinParamSynth(\A, \{ \loc_3 \}, \param_1)$.\ea{the model and results are available in the SVN}
	The initial state is $\symbstate_1 = (\loc_1, \clock \geq 0)$ (we omit the trivial constraints $\param_i \geq 0$).
	Its successors
		$\symbstate_2 = (\loc_3, \clock \geq 2 \land \param_1 > 2)$
	and
		$\symbstate_3 = (\loc_2, \clock \geq 0 \land \param_2 > 1)$
	are added to~$\Waiting$.
	Pick $\symbstate_2$ from~$\Waiting$: it is a target, and therefore $\GetMin(\C_2, \param_1)$ is computed, which gives $(2, >)$.
	Since $(2, >) < \infty$, we found a new minimum, and $\K$ becomes $\projectP{\C_2}$, \ie{} $\param_1 > 2$.
	Pick $\symbstate_3$ from~$\Waiting$: it is not a target, therefore we compute its successors
		$\symbstate_4 = (\loc_3, \clock \geq 2 \land \param_1 = 2 \land 1 < \param_2 < 2)$
	and
		$\symbstate_5 = (\loc_3, \clock \geq 2 \land \param_1 = \param_3 = 2 \land \param_2 > 1)$.
	Pick $\symbstate_4$: it is a target, with $\GetMin(\C_4, \param_1 ) = (2, =)$.
	As $(2, =) < (2, >)$, we found a new minimum, and $\K$ is replaced with $\projectP{\C_4}$, \ie{} $\param_1 = 2 \land 1 < \param_2 < 2$.
	Pick $\symbstate_5$: it is a target, with $\GetMin(\C_4 , \param_1 ) = (2, =)$.
	As $(2, =) = (2, =)$, we found an equally good minimum, and $\K$ is improved with $\projectP{\C_5}$, giving a new $\K$ equal to
	$(\param_1 = 2 \land 1 < \param_2 < 2)
	\lor
	(\param_1 = \param_3 = 2 \land \param_2 > 1)$.
	As $\Waiting = \emptyset$, $\K$ is returned.
\end{example}
\vb{The example seems a bit overly complicated and confusing. (I have some
trouble understanding it)}\ea{Oh. But I wanted to have both a better minimum,
and an identical minimum with $=$ / $>$. And an improved synthesis with the
same minimum.}
\vb{After making the figure a bit more readable it is indeed a good example}

\ea{remark that minimal time reachability can be obtained trivially from minimal-valued reachability by adding a new absolute clock and a new parameter with a guard clock=param on all transitions leading to the target location(s)}

\LongVersion{

\subsection{Correctness}

\ea{Slightly slightly hand-waving, but please let me know if anything looks unclear/dubious.
In fact, if you are interested, there is actually a (small!) flaw in the algorithm + result + proof, in the case the minimum is of the form $(c,>)$; if you find it, drink is on me (even better, if you have any suggestion for improvement---although I'm afraid it would only make everything more complicated---that's welcome too)
}

\begin{proposition}[soundness]\label{proposition:MinParamSynth:soundness}
	Assume $\MinParamSynth(\A, \LocsTarget, \param)$ terminates with result~$\K$.
	Let  $\pval \models \K$.
	Then $\pval \models \SynthMinPReach(\A, \LocsTarget, \param)$.
\end{proposition}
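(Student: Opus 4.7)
The plan is to split the claim into two independent parts: that $\pval$ ensures reachability of $\LocsTarget$ in $\valuate{\A}{\pval}$, and that $\pval(\param)=\MinPReach(\A,\LocsTarget,\param)$.

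For reachability, I would rely on the invariant that at every iteration, $\K$ is a finite disjunction of projections $\projectP{\C}$ coming from symbolic states $(\loc,\C)$ with $\loc\in\LocsTarget$ previously processed on \cref{alg:EFS:getmin}. At termination, any $\pval\models\K$ satisfies $\pval\models\projectP{\C}$ for at least one such state; \cref{prop:run-equivalence} then immediately furnishes an equivalent concrete run in $\valuate{\A}{\pval}$ ending in $\loc$, which gives reachability.

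For minimality, I would first prove that the final value of $\Kopt$ equals $\MinPReach(\A,\LocsTarget,\param)$. The direction $\Kopt\geq\MinPReach$ follows from the reachability argument above: every $\pval\models\projectP{\C}$ of a visited target state is a concrete valuation reaching $\LocsTarget$, so $\pval(\param)\geq\MinPReach$, and taking the infimum yields $\GetMin(\C,\param)\geq\MinPReach$. The converse $\Kopt\leq\MinPReach$ demands completeness of the exploration: since the algorithm terminates with $\Waiting=\emptyset$ and the passed-set test on \cref{algo:MinParamSynth:already} never discards an unseen state, every reachable symbolic state of the parametric zone graph has eventually been processed. Combined with \cref{prop:run-nonparam-param}, this ensures that every concrete reaching run is shadowed by some visited symbolic target state whose $\GetMin$ is at most the corresponding concrete $\pval(\param)$.

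The main obstacle is the final step: deducing $\pval(\param)=\Kopt$ from $\pval\models\K$. By induction on the iterations, every contribution to $\K$ arrives either through \cref{algo:MinParamSynth:replace-K} (which wipes all prior contributions) or \cref{algo:MinParamSynth:update-K}, in both cases from a state whose local $\GetMin$ equals the final $\Kopt$, so one easily obtains $\pval(\param)\geq\Kopt$. Equality is more delicate, because $\projectP{\C}$ also contains valuations strictly above the infimum of $\C$ on $\param$. The cleanest remedy, matching the author's marginal remark, is to conjoin $\projectP{\C}$ with $\param=\Kopt$ before storing in the closed case $\Kopt=(c,=)$, and to treat the open case $\Kopt=(c,>)$ by a slight relaxation of the synthesis specification; with this strengthening the desired equality, and hence $\pval\models\SynthMinPReach(\A,\LocsTarget,\param)$, goes through.
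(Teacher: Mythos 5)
Your decomposition --- reachability via \cref{prop:run-equivalence}, then $\Kopt=\MinPReach$, then relating $\K$ to $\Kopt$ --- is essentially the route the paper takes, and your first step coincides with it. One correction to the middle step: it is \emph{not} true that every reachable symbolic state is eventually processed, because the algorithm deliberately refrains from expanding successors of target states. The paper closes exactly this hole with \cref{lemma:restriction-P-constraint} (a successor can only restrict the parameter projection), which guarantees that the pruned subtrees can contribute neither a smaller value of $\param$ nor any new parameter valuation; you need that lemma rather than an appeal to exhaustive exploration to get $\Kopt\leq\MinPReach$ from \cref{prop:run-nonparam-param} (applied to the prefix of a concrete run up to its first visit of $\LocsTarget$).

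Your diagnosis of the final step is the substantive point, and it is correct: \cref{algo:MinParamSynth:replace-K,algo:MinParamSynth:update-K} store $\projectP{\C}$ without conjoining $\param=\Kopt$, so $\K$ may well contain valuations with $\pval(\param)$ strictly above the infimum (a single target state whose zone projects to $\param\geq 2$ makes the algorithm return $\param\geq 2$, whereas $\SynthMinPReach$ is $\param=2$). The paper's proof simply asserts that ``$\K$ contains all associated parameter valuations'' and concludes $\pval(\param)=\MinPReach$, i.e., it glosses over precisely the step at which you stopped; the authors' own annotations concede a flaw in the $(c,>)$ case, and your argument shows the problem already arises for $(c,=)$ whenever the target zone does not pin $\param$ to its infimum. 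Your proposed repair (intersecting with $\param=\Kopt$ in the closed case and weakening the specification for open infima) is the natural one, but note that it establishes soundness of a \emph{modified} algorithm rather than of the proposition as literally paired with \cref{algo:MinParamSynth}; as a review of the paper's proof, your proposal is the more honest account of where the argument actually stands.
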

\begin{proof}
	Recall that $\SynthMinPReach(\A, \param_i, \LocsTarget) = \{ \pval \mid \Reach(\valuate{\A}{\pval}, \LocsTarget) \neq \emptyset \land \pval(\param_i) = \MinPReach(\A, \param_i, \LocsTarget) \}$.
	Let us first show that $\Reach(\valuate{\A}{\pval}, \LocsTarget) \neq \emptyset$, \ie{} that $\LocsTarget$ is reachable in~$\valuate{\A}{\pval}$.
	From \cref{algo:MinParamSynth}
    (\cref{algo:MinParamSynth:replace-K,algo:MinParamSynth:update-K}), $\K$ is only made of the projection onto~$\Param$ of constraints associated with target symbolic states (\ie{} such that $\loc \in \LocsTarget$).
	Therefore, from \cref{prop:run-equivalence} there exists an equivalent concrete run reaching~$\LocsTarget$ in $\valuate{\A}{\pval}$, which gives that $\Reach(\valuate{\A}{\pval}, \LocsTarget) \neq \emptyset$.
	
	Let us now show that $\pval(\param_i) = \MinPReach(\A, \param_i, \LocsTarget)$.
	First, notice that the entire parametric zone graph of~$\A$ is explored by
    \cref{algo:MinParamSynth}, except when branches are cut (\ie{} successors
    are not explored), \ie{} when a target state is met: in that case, the
    state is added to~$\Passed$ (\cref{algo:MinParamSynth:add-to-P}) but its successors are not computed.
	Let us show that this result in no loss of information for \MinParamSynth{}
    (in fact, the same holds for \Synth{}, see \eg{} \cite{JLR15}).
	The following result (proved in \eg{} \cite{HRSV02,JLR15}\ea{I think it was in fact never proved, but it's a well-known---and fundamental---result for PTAs (I wrote a formal proof a few years ago in a \LaTeX{} draft that might be somewhere on my computer or on some remote SVN)}) states that the successor of a symbolic state can only restrict the parameter constraint.
	
	\begin{lemma}\label{lemma:restriction-P-constraint}
		Let $(\loc', \C') \in \Succ((\loc, \C))$.
		Then $\projectP{\C'} \subseteq \projectP{\C}$.
	\end{lemma}

	From \cref{lemma:restriction-P-constraint}, the unexplored symbolic states do not add any valuation to the known valuation in~$\K$.
	In addition, as \cref{algo:MinParamSynth} iteratively searches for the
    minimal $\Kopt$, then
	\begin{ienumeration}
		\item $\Kopt$ is eventually the minimum of $\param$, and
		\item $\K$ contains all associated parameter valuations associated with~$\param$.
	\end{ienumeration}
	Therefore, $\pval(\param_i) = \MinPReach(\A, \param_i, \LocsTarget)$.
	\qed
\end{proof}

\begin{proposition}[completeness]\label{proposition:MinParamSynth:completeness}
	Assume $\MinParamSynth(\A, \LocsTarget, \param)$ terminates with result~$\K$.
	Let $\pval \models \SynthMinPReach(\A, \LocsTarget, \param)$.
	Then $\pval \models \K$.
\end{proposition}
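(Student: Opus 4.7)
The plan is to dualize the reasoning of \cref{proposition:MinParamSynth:soundness}: rather than extracting a concrete witness from each element of $\K$, I would produce, for every $\pval \in \SynthMinPReach(\A, \LocsTarget, \param)$, a symbolic target state whose parameter projection contains $\pval$ and which is actually processed by \cref{algo:MinParamSynth}. Write $c = \pval(\param) = \MinPReach(\A, \param, \LocsTarget)$. Since $\LocsTarget$ is reachable in $\valuate{\A}{\pval}$, take a shortest such concrete run, so that only its final state lies in $\LocsTarget$. By \cref{prop:run-nonparam-param}, this run lifts to an equivalent symbolic run of $\A$ ending at some $(\loc, \C)$ with $\loc \in \LocsTarget$ and $\pval \models \projectP{\C}$; by minimality of the concrete prefix, no intermediate symbolic state on this run is a target.

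Next, I would argue that $(\loc, \C)$ is effectively popped during the execution. The algorithm skips the successors of a target (cf.\ the conditional at \cref{alg:EFS:sucstart}) but processes every symbolic state reachable via a path whose strict prefixes avoid $\LocsTarget$. Combined with termination (empty $\Waiting$ means every enqueued state has been processed) and with the symbolic run just constructed, this forces $(\loc, \C)$ to enter $\Passed$.

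At the iteration where $(\loc, \C)$ is processed, let $\sopt = \GetMin(\C, \param)$ and let $\Kopt_t$ denote the current value of $\Kopt$. Since $\pval \models \projectP{\C}$ and $\pval(\param) = c$, we have $\sopt \leq c$. By \cref{proposition:MinParamSynth:soundness}, the final value of $\Kopt$ equals $c$; since $\Kopt$ is monotonically non-increasing throughout the execution, $\Kopt_t \geq c$, and $\sopt < c$ would contradict the final value. Hence $\sopt = c$. If $\Kopt_t > c$, \cref{algo:MinParamSynth:replace-K} fires and $\K$ becomes $\projectP{\C}$; if $\Kopt_t = c$, \cref{algo:MinParamSynth:update-K} fires and $\projectP{\C}$ is added disjunctively. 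In both cases $\pval \models \K$ after this iteration. Thereafter $\Kopt$ cannot decrease further (again by the same contradiction with soundness), so \cref{algo:MinParamSynth:replace-K} never fires again and $\K$ is only ever augmented by disjunction, so $\pval \models \K$ still holds at termination.

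The main obstacle I anticipate is the degenerate case where the minimum has the form $(c, >)$, \ie{} where the infimum of $\param$ is approached but not attained. Under the present definition, $\SynthMinPReach(\A, \LocsTarget, \param)$ is then empty and the statement is vacuously true, but this is precisely the definitional gap flagged in the footnote before \cref{algo:MinParamSynth}; a fully tight argument would require either strengthening the ordering used by \GetMin{} so that strict/non-strict witnesses are compared properly, or relaxing the definition of $\SynthMinPReach$ to refer to an infimum instead of a minimum.
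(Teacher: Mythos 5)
Your proof follows essentially the same route as the paper's: lift the concrete run via \cref{prop:run-nonparam-param} to a symbolic target state $(\loc,\C)$ with $\pval \models \projectP{\C}$, then argue that this state is processed and that its projection ends up (and stays) in $\K$. In fact your version is tighter than the paper's in two places the authors gloss over: the shortest-run refinement guaranteeing that no strict prefix of the lifted symbolic run hits $\LocsTarget$ (needed because the algorithm cuts successors of target states, so an arbitrary lifted run might end at a state that is never generated), and the explicit argument that $\sopt = (c,=)$ at the relevant iteration. Your closing observation about the $(c,>)$ infimum-not-attained case is also on target --- it is a genuine gap in the stated definitions and algorithm, not merely a loose end in your own argument.
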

\begin{proof}
	Recall that $\SynthMinPReach(\A, \param_i, \LocsTarget) = \{ \pval \mid \Reach(\valuate{\A}{\pval}, \LocsTarget) \neq \emptyset \land \pval(\param_i) = \MinPReach(\A, \param_i, \LocsTarget) \}$.
	We use a reasoning dual to \cref{proposition:MinParamSynth:soundness}.
	By definition of $\MinPReach$, $\pval$ is the smallest one for which $\LocsTarget$ is reachable.
	Since $\exists \loc \in \LocsTarget$ reachable in~$\valuate{\A}{\pval}$, from \cref{prop:run-nonparam-param}, there exists an equivalent symbolic run in~$\A$ reaching $(\loc, \C)$, with $\pval \models \projectP{\C}$.
	In addition, from the way the minimum is managed in
    \cref{algo:MinParamSynth} together with the fact that the unexplored states
    do not bring any interesting valuation
    (\cref{lemma:restriction-P-constraint}), then this symbolic state $(\loc,
    \C)$ is kept by \cref{algo:MinParamSynth}, either at
    \cref{algo:MinParamSynth:replace-K} or \cref{algo:MinParamSynth:update-K}, and no further symbolic state will replace it.
	Thus, $\projectP{\C} \subseteq \K$, and therefore $\pval \models \K$.
	\qed
\end{proof}


\begin{theorem}[correctness]\label{theorem:MinParamSynth:correctness}
	Assume $\MinParamSynth(\A, \LocsTarget, \param)$ terminates with result~$\K$.
	Assume~$\pval$.
	Then $\pval \models \K$ iff $\pval \models \SynthMinPReach(\A, \LocsTarget, \param)$.
\end{theorem}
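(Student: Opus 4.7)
The plan is essentially to observe that the theorem is just the conjunction of the two propositions that immediately precede it, so the proof reduces to a direct application of \cref{proposition:MinParamSynth:soundness,proposition:MinParamSynth:completeness}. Concretely, I would split the biconditional into two directions and discharge each by citing the corresponding proposition.

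For the $(\Rightarrow)$ direction, assume $\pval \models \K$. Since $\MinParamSynth(\A, \LocsTarget, \param)$ is assumed to terminate with result~$\K$, \cref{proposition:MinParamSynth:soundness} applies and gives immediately that $\pval \models \SynthMinPReach(\A, \LocsTarget, \param)$. For the $(\Leftarrow)$ direction, assume $\pval \models \SynthMinPReach(\A, \LocsTarget, \param)$; again relying on termination of \MinParamSynth{} with output~$\K$, \cref{proposition:MinParamSynth:completeness} yields $\pval \models \K$. Combining the two directions gives the desired equivalence.

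The only subtlety worth flagging is the assumption of termination: both helper propositions explicitly require that \MinParamSynth{} terminates with output~$\K$, which matches the hypothesis of the theorem, so there is no gap here. I would not anticipate any real obstacle—the heavy lifting (monotonicity of the parametric constraint along successors via \cref{lemma:restriction-P-constraint}, and the correspondence between concrete and symbolic runs via \cref{prop:run-equivalence,prop:run-nonparam-param}) has already been carried out in the proofs of soundness and completeness. The proof can therefore be stated in a couple of lines and concluded with \qed.
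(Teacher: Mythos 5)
Your proposal is correct and matches the paper's own proof exactly: the theorem is discharged by combining \cref{proposition:MinParamSynth:soundness} for the forward direction and \cref{proposition:MinParamSynth:completeness} for the converse, under the shared termination hypothesis. No further comment is needed.
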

\begin{proof}
	From \cref{proposition:MinParamSynth:soundness,proposition:MinParamSynth:completeness}.
	\qed
\end{proof}
}

\LongVersion{
\subsection{A subclass for which the solution can be computed}\label{ss:1clock}

}

\ShortVersion{
\cref{algo:MinParamSynth} is a semi-algorithm; if it terminates with result
$\K$, then $\K$ is a solution for the $\MinParamSynth$ problem.
Correctness follows from the fact that the algorithm explores the entire
parametric zone graph, except for successors of target states
(from~\cite{HRSV02,JLR15} we have that successors of a symbolic state can only
restrict the parameter constraint, hence we cannot improve).
Furthermore, the minimum is tracked and updated whenever a target state is
reached.
}

We show that synthesis can effectively be achieved for PTAs with a single clock, a decidable
subclass.

\begin{proposition}[synthesis for one-clock PTAs]\label{proposition:decidability:1c}
	The solution to the minimal-parameter reachability synthesis can be computed for 1-clock PTAs using a finite union of polyhedra.
\end{proposition}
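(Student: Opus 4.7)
The plan is to exploit known decidability results for one-clock PTAs~\cite{Miller00,BBLS15} in two stages. First, I would invoke the termination arguments of these works to show that, for 1-clock PTAs, the set
$R = \{\pval \mid \Reach(\valuate{\A}{\pval}, \LocsTarget) \neq \emptyset\}$
is computable as a finite union of polyhedra: the reachable parametric zones can be enumerated modulo subsumption since, with a single clock, the bounds appearing in them are drawn from a finite set of linear terms built from the constants and parameter occurrences of $\A$; each target symbolic state $(\loc, \C)$ with $\loc \in \LocsTarget$ contributes the polyhedron $\projectP{\C}$ to $R$.

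Second, given $R = \bigcup_i P_i$ as such a finite union, I would compute $\Kopt$ by solving a linear program on each $P_i$ and taking the minimum (keeping track of whether the infimum is attained in order to recover the ``$=$''~versus~``$>$'' flag). The desired set $\SynthMinPReach(\A, \LocsTarget, \param)$ is then $R \cap \{\pval \mid \pval(\param) = \Kopt\}$, \ie{} the intersection of a finite union of polyhedra with a single hyperplane, which is still a finite union of polyhedra.

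Equivalently, one may adapt \cref{algo:MinParamSynth} by replacing the equality test at \cref{algo:MinParamSynth:already} with zone inclusion (skip $\symbstate' = (\loc', \C')$ whenever some $(\loc', \C'') \in \Waiting \cup \Passed$ satisfies $\C' \subseteq \C''$). Correctness of this pruning follows from \cref{lemma:restriction-P-constraint}, since no parameter valuation reaching $\LocsTarget$ is lost; termination then reduces to the same finiteness argument. The main obstacle is precisely justifying that this subsumption-abstracted parametric zone graph is finite: this is where the one-clock restriction is essential, since for general PTAs the graph is known to be infinite, and I would rely on the analyses of~\cite{Miller00,BBLS15} to settle this key technical point.
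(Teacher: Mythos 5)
There is a genuine gap, and you have in fact pointed at it yourself: the entire argument hinges on the finiteness of the (possibly subsumption-abstracted) parametric zone graph of a 1-clock PTA, and you leave this ``main obstacle'' to \cite{Miller00,BBLS15}. Those references do not supply what you need: they establish decidability of reachability-emptiness under different restrictions (a single \emph{parametric} clock compared to parameters, or integer-valued/single parameters), not that the parametric zone graph of a 1-clock PTA with arbitrarily many rational-valued parameters has finitely many distinct zones. The paper closes exactly this point by invoking \cite{AM15}, where the parametric zone graph of a 1-clock PTA is shown to be finite \emph{outright}. Once that fact is available, no subsumption or zone-inclusion machinery is needed at all: \cref{algo:MinParamSynth} explores a subgraph of a finite graph, never revisits a state thanks to $\Passed$, hence terminates, and returns a finite union of polyhedra by construction of~$\K$; correctness is then exactly \cref{theorem:MinParamSynth:correctness}. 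Your informal ``bounds drawn from a finite set of linear terms'' sketch is the right intuition but is not a proof, and for general PTAs the analogous statement fails, so this step cannot be waved through.

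Two smaller remarks. First, your two-stage variant (compute $R = \{\pval \mid \Reach(\valuate{\A}{\pval}, \LocsTarget) \neq \emptyset\}$ as a finite union of polyhedra, then minimize $\param$ over it and intersect with $\pval(\param) = \Kopt$) is a legitimate alternative finish given finiteness, and is arguably cleaner than interleaving minimization with exploration; but the hyperplane intersection only makes sense when the infimum is attained, i.e.\ when $\Kopt$ is of the form $(c,=)$ --- in the $(c,>)$ case the set $\SynthMinPReach$ as defined is empty (or must be handled specially), so ``keeping the flag'' is not enough, you must say what the answer is in that case. Second, your inclusion-based pruning of \cref{algo:MinParamSynth} is sound (it is the state-inclusion reduction the paper uses in its experiments), but it adds a correctness obligation without removing the termination obligation, so it buys nothing here.
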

\LongVersion{
\begin{proof}
	Let us prove termination of \cref{algo:MinParamSynth}.
	In~\cite{AM15}, we showed that the parametric zone graph of a 1-clock PTA is finite.
	By computing successors of symbolic states, \cref{algo:MinParamSynth} clearly explores (a subpart of) the parametric zone graph of~$\A$.
	In addition, no symbolic state is explored twice, thanks to the $\Passed$ set.
	Therefore, \cref{algo:MinParamSynth} terminates for 1-clock PTA and returns a finite union of polyhedra (from the way $\K$ is synthesized).
	The correctness follows from \cref{theorem:MinParamSynth:correctness}.
	\qed
\end{proof}
}

\section{Minimal time reachability synthesis}\label{section:minTime}

For minimal-time reachability and synthesis, we assume that the PTA contains a
global clock $\globaltime$ that is never reset.  Otherwise, we 
extend the PTA by simply adding a `dummy' clock $\globaltime$ without any
associated guards or invariants.

\LongVersion{
\subsection{The algorithm}
}

\begin{algorithm}[!htb]
	\Input{A PTA $\A$ with symbolic initial state $\symbstateinit = (\locinit,
\Cinit)$, a set of target locations $\LocsTarget$, a global clock that never
resets $\globaltime$.}
\Output{Minimal time $\Topt$ constraint $\K$ over the parameters.}

	\BlankLine
	
	\LongVersion{\tcp{Initialization}}
    $\Queue \assign \{ ( 0, \symbstateinit ) \} $ \tcp*{priority queue ordered by
    time}
	
	$\Passed \assign \emptyset$ \tcp*{passed set}
	
	$\K \assign \KFalse$ \tcp*{current optimum parameter valuations}
	
    $\Topt \assign \infty$ \tcp*{current optimum time}
	
	\LongVersion{\tcp{Main loop}}
	\While{$\Queue \neq \emptyset$}{
	
        $( \Tlow,\symbstate = (\loc, \C)) = \Queue.\Pop()$
        \tcp*{take head of the queue and remove it}\label{alg:PQ:pop}
		

		$\Passed \assign \Passed \cup \{ \symbstate \}$ 
		
		\lIf{$\Tlow > \Topt$}{%
            \Break\label{alg:PQ:break}
		}
		\ElseIf(\tcp*[f]{when $\symbstate$ is a target state and $\Tlow\leq\Topt$}){$\loc \in \LocsTarget$}{%
			
			$\K \assign \K \lor \projectP{(\C \land \globaltime =
                \Tlow)}$\tcp*[f]{valuations for which $\Tlow=\Topt$}\label{alg:PQ:addKlow}
	

			


		
        }

		\Else(\tcp*[f]{otherwise explore successors}){%
            \For{\textbf{\emph{each}} $\symbstate' \in
                \Succ(\symbstate)$}{\label{alg:PQ:sucstart}%

                \lIf(\tcp*[f]{ignore seen states}){$\symbstate' \in \Queue \lor \symbstate' \in \Passed$}{%
                    \Continue\label{alg:PQ:ignore}
				}

                $\Tlow' \assign \GetMin(\symbstate'.\C,\globaltime)$ \tcp*{get
                minimal time of $\symbstate'.\C$}\label{alg:PQ:lb}
                    

%

				\If(\tcp*[f]{only add states not exceeding
                    $\Topt$}){$\Tlow' \leq \Topt$}{\label{alg:PQ:leqTopt}

					\If{$\symbstate'.\loc \in \LocsTarget \land \Tlow' <
                        \Topt$}{\label{alg:PQ:bettergoal}

						$\Topt \assign \Tlow'$ \tcp*{new lower time to target}
					}

                    $\Queue.\Push( ( \Tlow', \symbstate' ) )$
                    \tcp*{add to the priority
                    queue\label{alg:PQ:sucend}}\label{alg:PQ:addQ}
				}
			}
		}
	}
	
    \Return $( \Topt, \K )$
	\caption{$\MinTimeSynth(\A, \LocsTarget, \globaltime)$}
	\label{algo:MinTimeSynth}
\end{algorithm}

\lp{Algorithm 2: The else statements at lines 9 and 11 were incorrectly
indented as they seemed to refer to the same if. Check my modification.}
\vb{I changed it back, it's just an `if -- else if -- else' construction, that
should be well known}

\vb{NB: I don't think we discussed $\symbstate'.\C$ earlier}

\ea{we should perhaps say that there is no guarantee that the queue will
enhance the situation! actually i can probably find an easy example that shows
that a queue system should be wrong sometimes (done, that's in the \imitator{}
repository in \texttt{benchmarks/Examples/exPQ-inefficient.imi}. I guess it's
inefficient. Vincent, can you confirm?}

We give $\MinTimeSynth(\A, \LocsTarget, \param)$ in
\cref{algo:MinTimeSynth}.
We maintain a \emph{priority queue} $\Queue$ of waiting symbolic states and order
these by their minimal time (for the initial state this is 0). We further
maintain a set~$\Passed$ of passed states, a current time optimum~$\Topt$
(initially $\infty$), and the associated optimal valuations~$\K$.
We first explain the synthesis algorithm and then the reachability variant.

\paragraph{Minimal-time reachability synthesis.}
While $\Queue$ is not empty, the state with the lowest associated minimal time $\Tlow$
is popped from the head of the queue (line~\ref{alg:PQ:pop}). 
If this time $\Tlow$ is larger than $\Topt$
(line~\ref{alg:PQ:break}), then this also holds for all remaining states in
$\Queue$. Also all successor states from $\symbstate$ (or successors of any
state from $\Queue$) cannot have a better minimal time, thus we can end the
algorithm.

Otherwise, if $\symbstate$ is a target state, we assume that $\Tlow \nless
\Topt$ and thus $\Tlow=\Topt$ (we guarantee this property when pushing states to
the queue). Before adding the parameter valuations to $\K$ in
line~\ref{alg:PQ:addKlow}, we intersect the constraint with $\globaltime =
\Tlow$ in case the clock value depends on parameters, \eg{} if $\C$ is
$\globaltime = \param$.\footnote{
    In case $\Tlow$ is of the form $(c,>)$ with $c\in\grandqplus$, then the intersection of $\C$
    with the linear term $\globaltime = \Tlow$ would result in $\bot$, as the
    exact value $\Tlow$ is not part of the constraint. In
    the implementation, we intersect $\C$ with $\globaltime = \Tlow +
    \varepsilon$, for a small $\varepsilon > 0$.
}

If $\symbstate$ is not a target state, then we consider its successors in
lines~\ref{alg:PQ:sucstart}-\ref{alg:PQ:sucend}. 
We ignore states that have been visited before (line~\ref{alg:PQ:ignore}), and
compute the minimal time of $\symbstate'$ in line~\ref{alg:PQ:lb}.
We compare $\Tlow'$
with $\Topt$ in line~\ref{alg:PQ:leqTopt}. All successor states for
which $\Tlow'$ exceeds $\Topt$ are ignored, as they cannot improve the result.

If $\symbstate'$ is a target state and $\Tlow' < \Topt$, then we update
$\Topt$.
Finally, the successor state is pushed to the priority queue in
line~\ref{alg:PQ:addQ}. Note that we preserve the property that $\Tlow \nless
\Topt$ for the states in $\Queue$.





\paragraph{Minimal-time reachability.}
When we are interested in just a single parameter valuation, we may end the
algorithm early. The algorithm can be terminated as soon as it reaches
line~\ref{alg:PQ:addKlow}. We can assert at this point that $\Topt$ will not
decrease any further, since all remaining unexplored states have a minimal time
that is larger than or equal to $\Topt$.


\LongVersion{
\subsection{Correctness}
}

\medskip

\cref{algo:MinTimeSynth} is a semi-algorithm; if it terminates with result
$( \Topt, \K)$, then $\K$ is a solution for the $\MinTimeSynth$ problem.
Correctness follows from the fact that the algorithm explores exactly all
symbolic states in the parametric zone graph that can be reached in at most
$\Topt$ time, except for successors of target states.
Note (again) that successors of a symbolic state can only
restrict the parameter constraint.
Furthermore, $\Topt$ is checked and updated for every encountered successor to
ensure that the first time a target state is popped from the priority
queue $\Queue$, it is reached in $\Topt$ time (after which $\Topt$ never
changes).

\section{Experiments}\label{section:experiments}

We implemented all our algorithms in the \imitator{} tool~\cite{AFKS12} and
compared their performance with the standard (non-minimization) {\Synth} parameter
synthesis algorithm from~\cite{JLR15}.
For the experiments, we are interested in analysing the performance (in the
form of computation time) of each algorithm, and comparing that with the
performance of standard synthesis.

\paragraph{Benchmark models.}
We collected PTA models and properties from the \imitator{}
benchmarks library~\cite{Andre18FTSCS}
which contains numerous benchmark models from scientific and industrial
domains. We selected all models with reachability properties
and extended these to include: (1) a new clock variable that represents the
global time $\globaltime$, \ie{} a clock that does not reset, and (2) a new
parameter $\globaltimeparam$ along with the linear term $\globaltime =
\globaltimeparam$ for every transition that targets a goal location, to ensure
that when minimizing $\globaltimeparam$ we effectively minimize $\globaltime$.
In total we have 68 models, and for every experiment we used the extended model
that includes both the global time clock $\globaltime$ and the corresponding
parameter $\globaltimeparam$.

\paragraph{Subsumption.}
For each algorithm that we consider, it is possible to reduce the search space
with the following two reduction techniques:
\begin{itemize}[topsep=0pt]
    \item \emph{State inclusion}~\cite{DT98}: Given two symbolic states
        $\symbstate_1 = (\ell_1,\C_1)$ and $\symbstate_2 = (\ell_2,\C_2)$ with
        $\ell_1=\ell_2$, we say that $\symbstate_1$ is included in
        $\symbstate_2$ if all parameter valuations for
        $\symbstate_1$ are also contained in $\symbstate_2$, \eg{}
        $\C_1$ is $\param>5$ and $C_2$ is $\param > 2$. We may then conclude
        that $\symbstate_1$ is redundant and can be ignored. This check can be
        performed in the successor computation ($\Succ$) to remove included
        states, without altering correctness for minimal-time (or
        parameter) synthesis.
    \item \emph{State merging}~\cite{AFS13atva}: Two states $\symbstate_1 =
        (\ell_1,\C_1)$ and $\symbstate_2 = (\ell_2,\C_2)$  can be merged if
        $\ell_1=\ell_2$ and $\C_1 \cup \C_2$ is a convex polyhedron. The
        resulting state $(\ell_1, \C_1 \cup \C_2)$ replaces $\symbstate_1$ and
        $\symbstate_2$ and is an over-approximation of both states. However,
        reachable locations, minimality, and executable actions are preserved.
\end{itemize}
State inclusion is a relatively inexpensive computational task and preliminary
results showed that it caused the algorithm to perform equally fast or faster
than without the check. Checking for merging is however a computationally
expensive procedure and thus should not be performed for every newly found
state. For all BFS-based algorithms (standard synthesis and minimal-parameter
synthesis) we merge every BFS layer.
For the minimal-time synthesis algorithm, we empirically studied various merging heuristics
and
found that merging every ten iterations of the algorithm yielded the best
results. We assume that both the inclusion and merging state-space reductions
are used in all experiments (all computation times include the overhead the
reductions), unless otherwise mentioned.

\paragraph{Run configurations.}
For the experiments we used the following configurations:
\begin{itemize}[topsep=0pt]
    \item \ExpMinTimeReach{}: Minimal-time reachability,
    \item \ExpMinTimeSynth{}: Minimal-time synthesis,
    \item \ExpMinTimeSynthNoIM{}: Minimal time synthesis, without
        reductions,
    \item \ExpMinParamReach{}: Minimal-parameter reachability (of
        $\globaltimeparam$), and
    \item \ExpMinParamSynth{}: Minimal-parameter synthesis (of
        $\globaltimeparam$), and
    \item \ExpSynth{}: Classical reachability synthesis.
\end{itemize}

\paragraph{Experimental setup.}
We performed all our experiments on an Intel$^{\tiny{\text{\textregistered}}}$
Core$^\textsc{tm}$ i7-4710MQ processor with 2.50GHz and 7.4GiB memory, using a
single thread. The six run configurations were executed on each benchmark
model, with a timeout of 3600 seconds.
All our models, results, and information on how to reproduce the results are
available on
\href{https://github.com/utwente-fmt/OptTime-TACAS19}{https://github.com/utwente-fmt/OptTime-TACAS19}.

\subsubsection{Results.}

\begin{figure}[t!]
	\centering
	\begin{subfigure}[b]{0.49\textwidth}
 		\includegraphics[width=\textwidth]{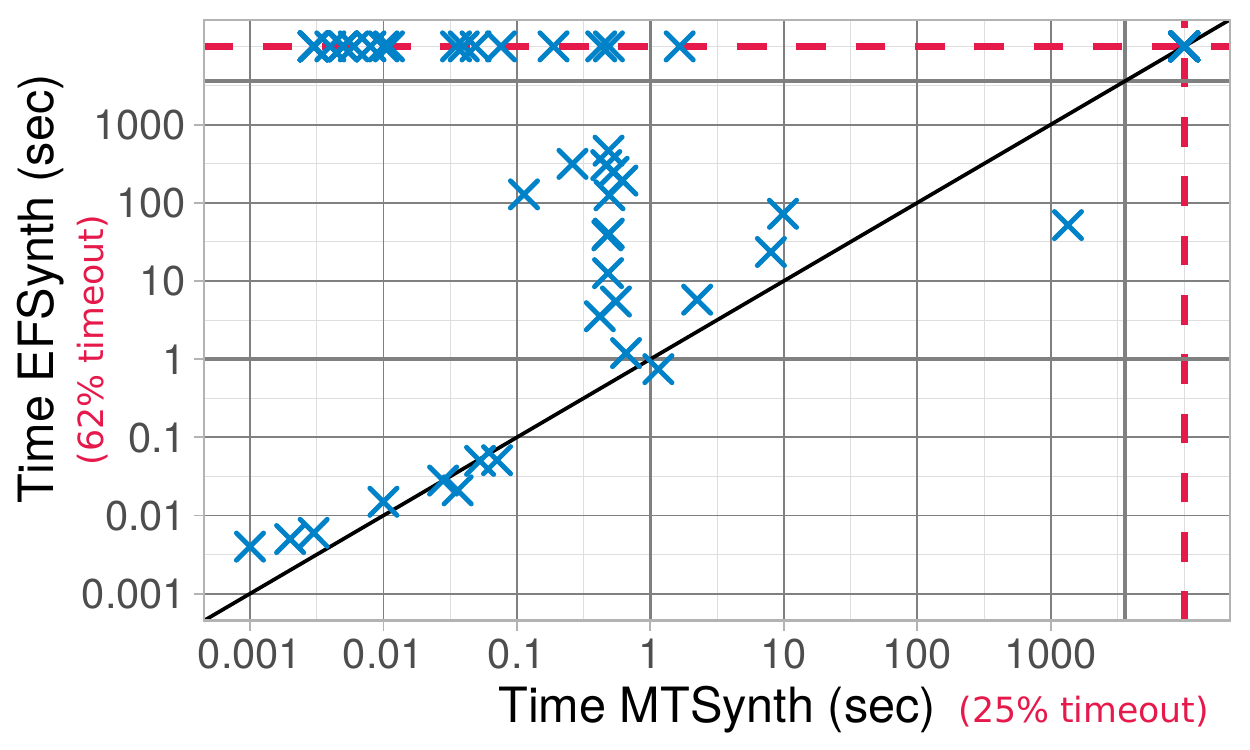}
	\end{subfigure}
	\hfill
	\begin{subfigure}[b]{0.49\textwidth}
 		\includegraphics[width=\textwidth]{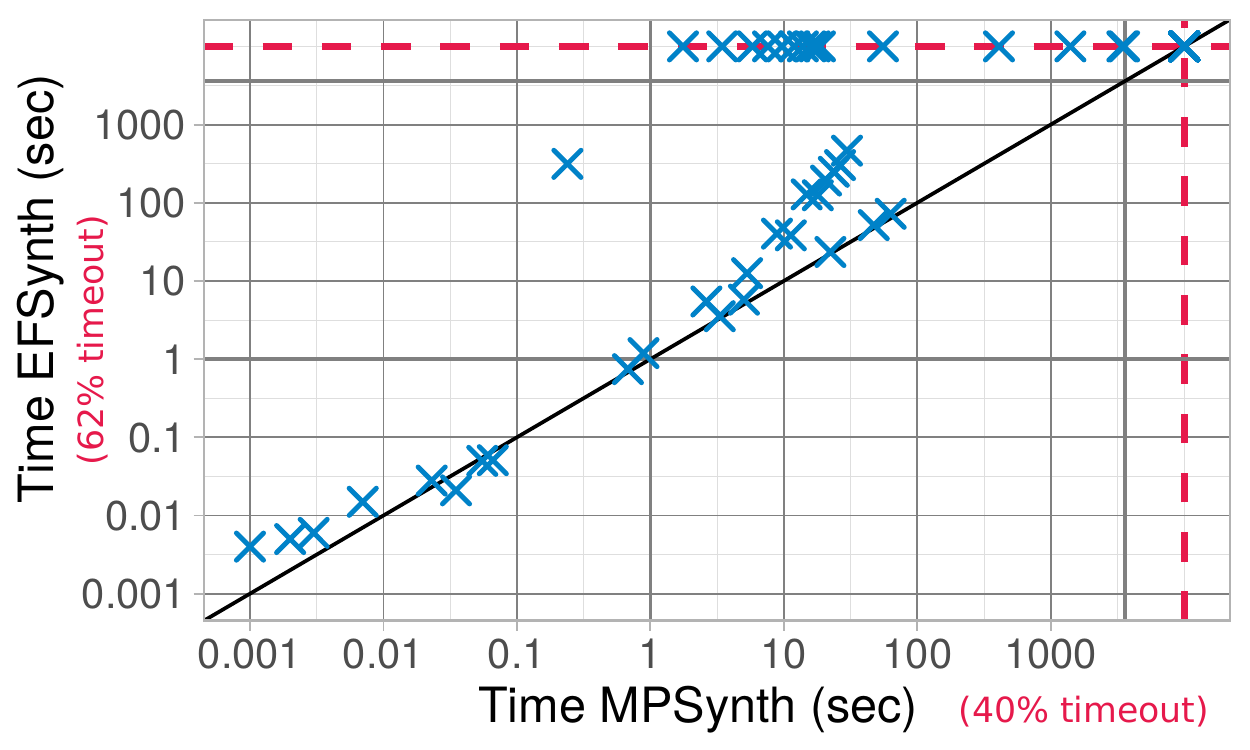}
	\end{subfigure}
    \\
	\begin{subfigure}[b]{0.49\textwidth}
 		\includegraphics[width=\textwidth]{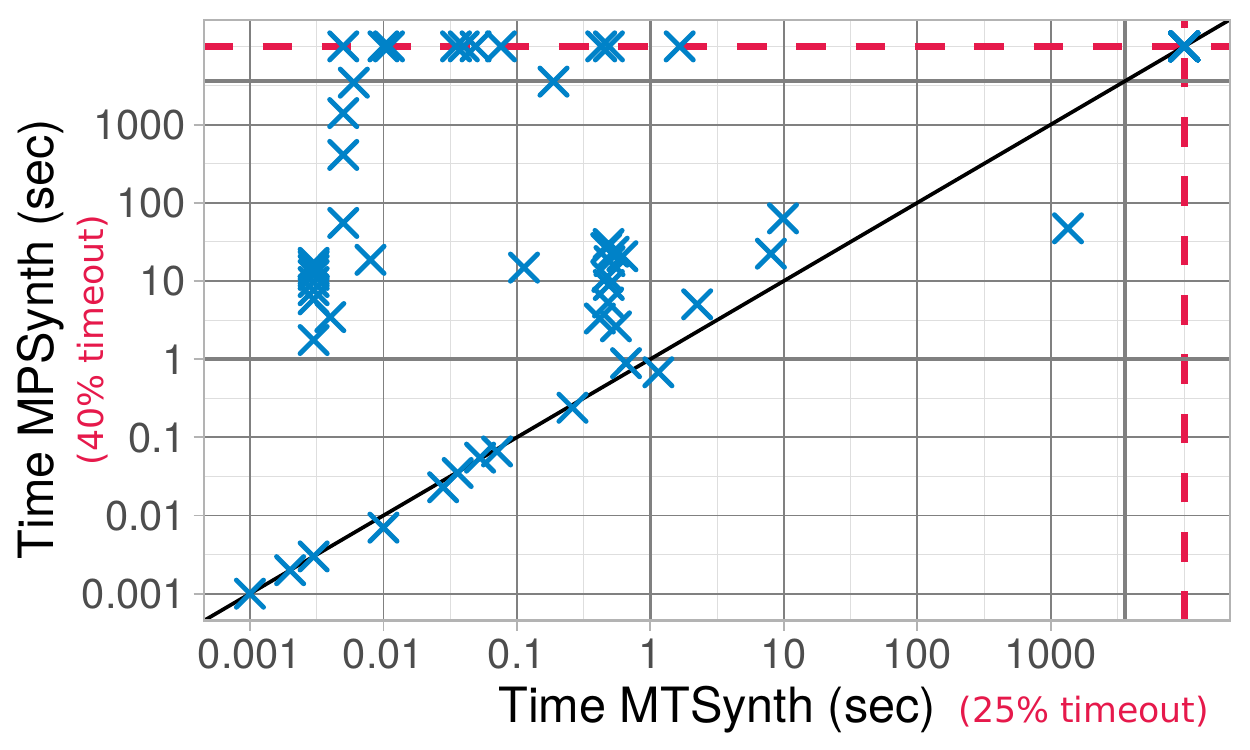}
	\end{subfigure}
	\hfill
	\begin{subfigure}[b]{0.49\textwidth}
 		\includegraphics[width=\textwidth]{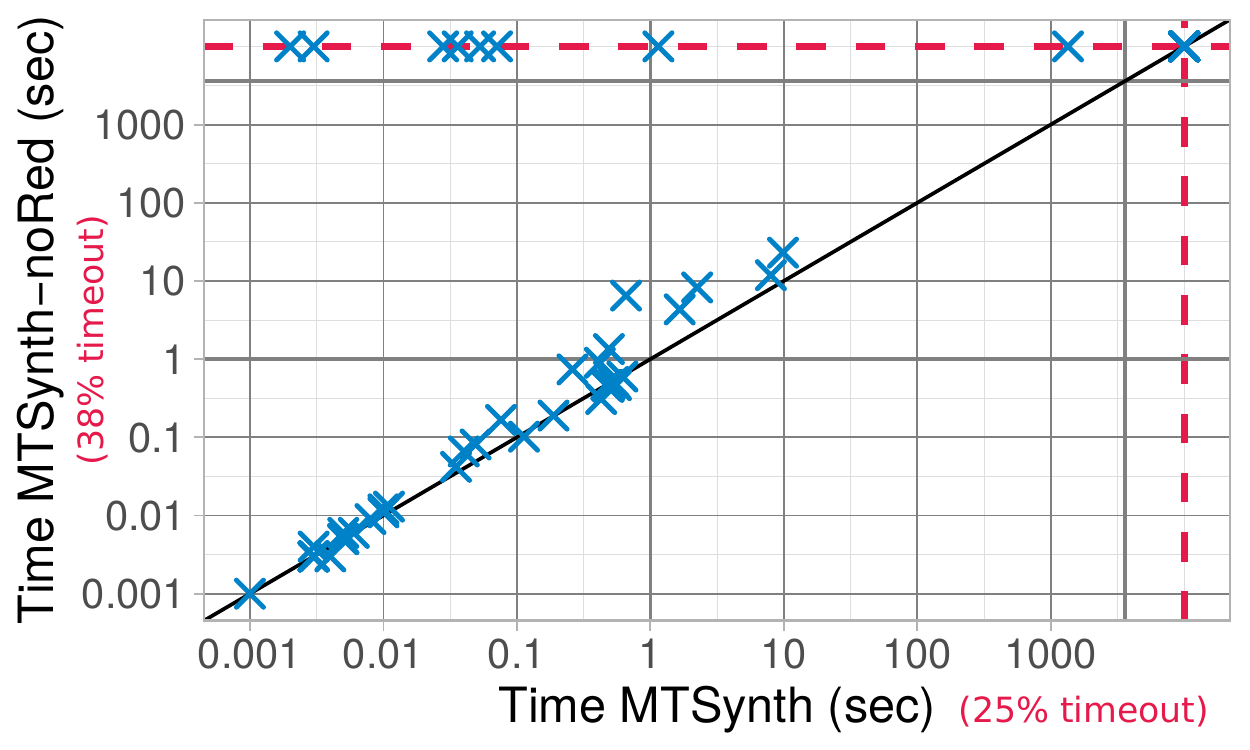}
	\end{subfigure}
	\\
	\begin{subfigure}[b]{0.49\textwidth}
 		\includegraphics[width=\textwidth]{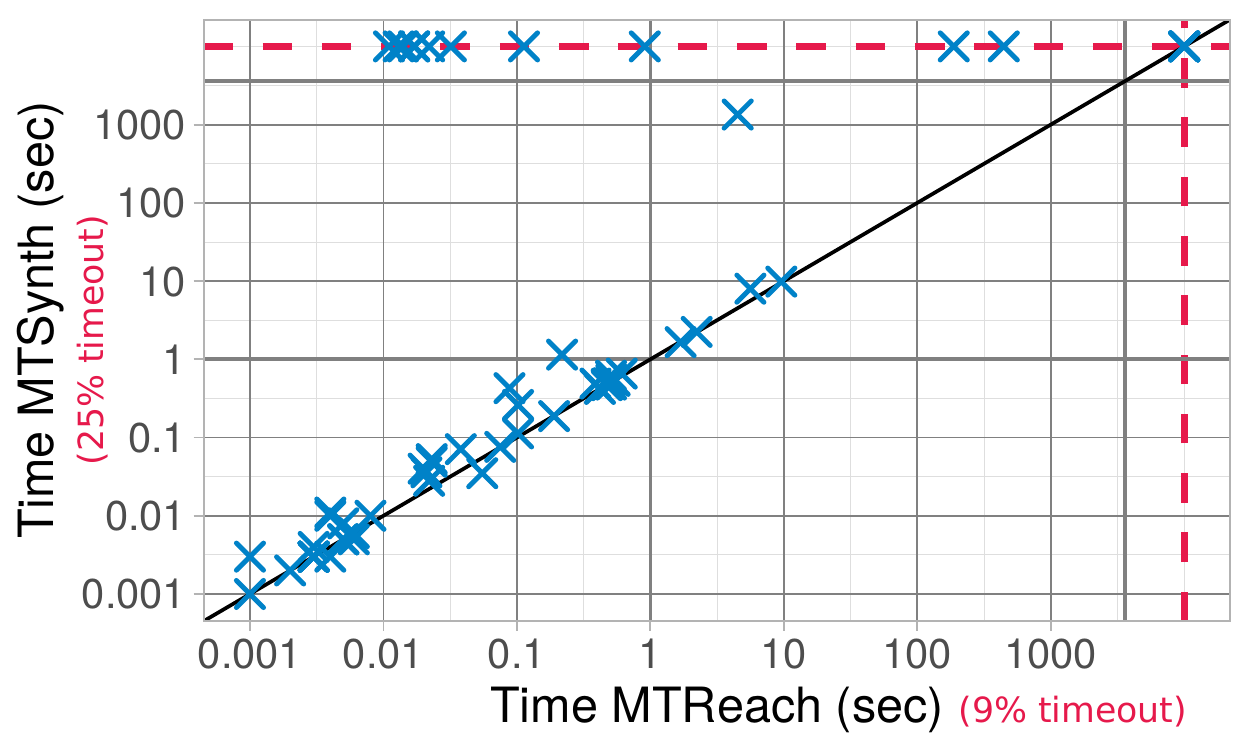}
	\end{subfigure}
	\hfill
	\begin{subfigure}[b]{0.49\textwidth}
 		\includegraphics[width=\textwidth]{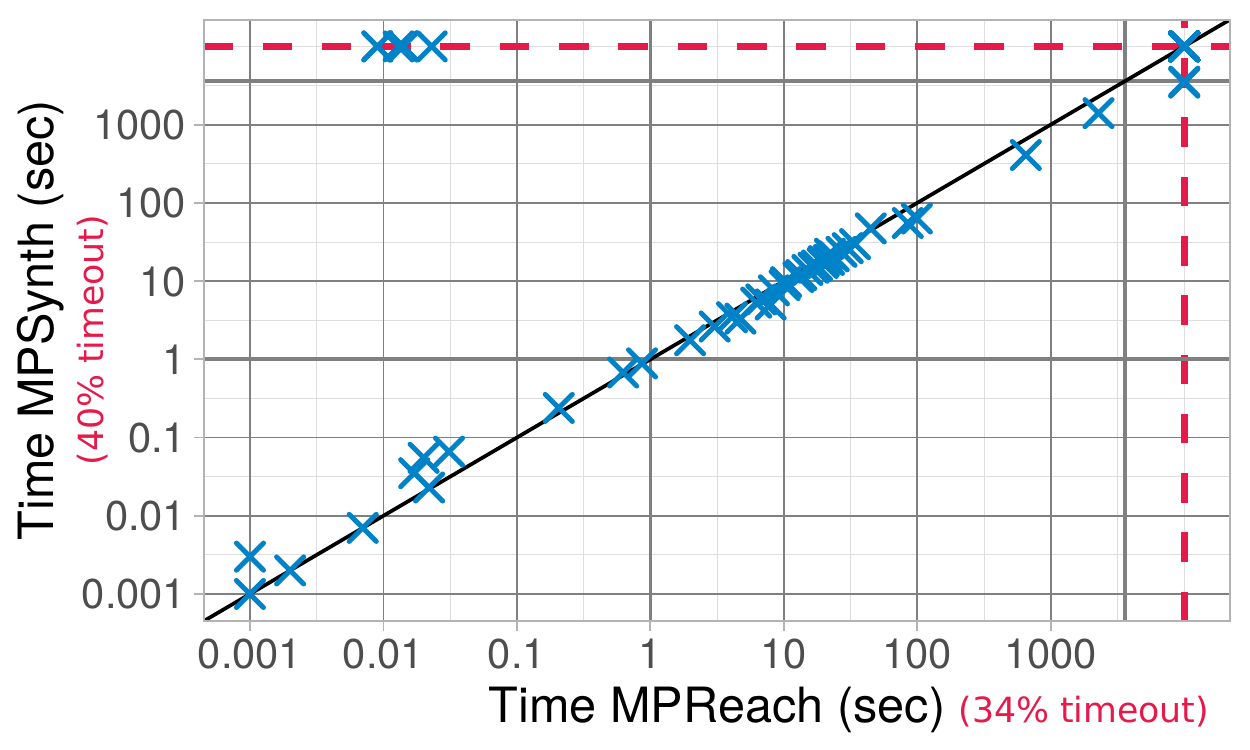}
	\end{subfigure}
    \caption{Scatterplot comparisons of different algorithm configurations. The
        marks on the \textcolor{red}{red} dashed line did not finish computing
    within the allowed time (3600s). }\label{fig:results}
    \vspace{-1em}
\end{figure}

The results of our experiments are displayed in \cref{fig:results}.

{{\ExpMinTimeSynth} vs {\ExpSynth}.}
We observe that for most of the models {\ExpMinTimeSynth} clearly outperforms
{\ExpSynth}. This is to be expected since all states that take more than the
minimal time can be ignored.  Note that the experiments that appear on a
vertical line between $0.1s < x < 1s$ are a scaled-up variant of the same model,
indicating that this scaling does not affect minimal-time synthesis. Finally,
the model plotted at $(1346,52)$ does not heavily modify the
clocks. As a consequence, {\ExpMinTimeSynth} has to explore most of the state
space while continuously having to extract the time constraints, making it
inefficient.

{{\ExpMinParamSynth} vs {\ExpSynth}.}
We can see that {\ExpMinParamSynth} performs more similar to {\ExpSynth} than
{\ExpMinTimeSynth}, which is to be expected as the algorithms differ less.
Still, {\ExpMinParamSynth} significantly outperforms {\ExpSynth}. This is also
because fewer states have to be explored to guarantee optimality (once a
parameter exceeds the minimal value, all its successors can be ignored).

{{\ExpMinTimeSynth} vs {\ExpMinParamSynth}.}
Here, we find that {\ExpMinTimeSynth} outperforms {\ExpMinParamSynth},
similar to the comparison with {\ExpSynth}. The results also show a second
scalable model around $(0.003,10)$ and we see that {\ExpMinParamSynth} is able
to solve the `bad performing model' for {\ExpMinTimeSynth} as quickly as
{\ExpSynth}. Still, we can conclude that the minimal-time
synthesis problem is in general more efficiently solved with the {\ExpMinTimeSynth}
algorithm.

{{\ExpMinTimeSynth} vs {\ExpMinTimeSynthNoIM}.}
Here we can see the advantage of using the inclusion and merging reductions to
reduce the search space. For most models there is a non-existent to slight
improvement, but for others it makes a large difference. While there is some
computational overhead in performing these reductions, this
overhead is not significant enough to outweigh their benefits.

{{\ExpMinTimeReach} vs {\ExpMinTimeSynth}.}
With {\ExpMinTimeReach} we expect faster execution times as the
algorithm terminates once a parameter valuation is found. The experiments
show that this is indeed the case (mostly visible from the timeout line).
However, we also observe that for quite a few models the difference is not as
significant, implying that synthesis results can often be quickly obtained once
a single minimal-time valuation is found.

{{\ExpMinParamReach} vs {\ExpMinParamSynth}.}
Here we also expect {\ExpMinParamReach} to be faster than its synthesis variant.
While it does quickly solve six instances for which {\ExpMinParamSynth} timed
out, other than that there is no real performance gain. We also argue here
that synthesis is obtained quickly when a minimal parameter bound is found.
Of course we are effectively computing a minimal global time, so results may
change when a different parameter is minimized.

\vspace{-.5em}
\section{Conclusion}\label{section:conclusion}
\vspace{-.5em}

We have designed and implemented several algorithms to solve the minimal-time
parameter synthesis and related problems for PTAs.
From our experiments we observed in general that minimal-time reachability
synthesis is in fact faster to compute compared to standard
synthesis. We further show that synthesis while minimizing a parameter is also
more efficient, and that existing search space reductions apply well to our algorithms.

Aside from the performance improvement, we deem minimal-time reachability synthesis to be
useful in practice. It allows for evaluating which parameter valuations
guarantee that the goal is reached in minimal time.
We consider it particularly valuable when reasoning about real-time systems.

On the theoretical side, we did not address the minimal-parameter reachability problem for L/U-PTAs (we only showed intractability of the synthesis).
While finding the minimal valuation of a given lower-bound parameter is trivial
(the answer is~0 iff the target location is reachable), finding the minimum of
an upper-bound parameter boils down to reachability-synthesis for U-PTAs, a
problem that remains open in general (it is only solvable for integer-valued
parameters~\cite{BlT09}), as well as to shrinking timed automata~\cite{SBM14},
but with 0-coefficients in the shrinking vector---not allowed in~\cite{SBM14}.

A direction for future work is to improve performance by exploiting
parallelism. Parallel random search could 
significantly speed up the computation process, as demonstrated for timed
automata~\cite{ZNL16SAC,ZNL16}. Another interesting research direction is to
look at maximizing the time to reach the target, or to
minimize the \emph{upper-bound} time to reach the target (\eg{} for minimizing
the worst-case response-time in real-time systems); a preliminary study
suggests that the latter problem is significantly more complex than the
minimal-time synthesis problem.
One may also study other quantitative criteria, \eg
minimizing cost parameters.




\ifdefined\VersionLong
	\bibliographystyle{alpha} 
	\newcommand{\CCIS}{Communications in Computer and Information Science}
	\newcommand{\IJFCS}{International Journal of Foundations of Computer Science}
	\newcommand{\JLAP}{Journal of Logic and Algebraic Programming}
	\newcommand{\LNCS}{Lecture Notes in Computer Science}
	\newcommand{\STTT}{International Journal on Software Tools for Technology Transfer}
	\newcommand{\ToPNoC}{Transactions on Petri Nets and Other Models of Concurrency}
\else
	\bibliographystyle{splncs04} 
	\newcommand{\CCIS}{CCIS}
	\newcommand{\IJFCS}{IJFCS}
	\newcommand{\JLAP}{JLAP}
	\newcommand{\LNCS}{LNCS}
	\newcommand{\STTT}{STTT}
	\newcommand{\ToPNoC}{ToPNoC}
\fi
\bibliography{OptTime}

\todo{%

\LongVersion{
\newpage
\appendix

\begin{center}
	\bfseries
	\huge
	Appendix
\end{center}

\ea{ !! Reminder: appendix are NOT allowed at TACAS 2019 !!}

\subsection{New version more practical to reuse/extend}

\ea{below a more convenient version that we may use}

\Synth{} is given in \cref{algo:EF}.

\Synth{} proceeds as a post-order traversal of the symbolic reachability tree, and collects all parametric constraints associated with the target locations~$\LocsTarget$.
Given $\symbstate = (\loc, \C)$, let $\symbstate.\C$ denote~$\C$.
$\Waiting$ denotes the set of states to explore (in a waiting set).
$\Passed$ denotes the set of passed states.
Note that if $\Waiting$ is implemented as a FIFO list with ``pick'' the first element, then this algorithm is a classical BFS procedure.

\begin{algorithm}[htb]
	\Input{A PTA $\A$ with symbolic initial state $\symbstateinit = (\locinit,\Cinit)$, a set of target locations $\LocsTarget$}
	\Output{Constraint $\K$ over the parameters for which $\LocsTarget$ is reachable}

	\BlankLine
	
	\LongVersion{\tcp{Initialization}}
	$\Waiting \assign \{ \symbstateinit \} $ \tcp*{waiting set}
	
	$\Passed \assign \emptyset$ \tcp*{passed set}
	
	$\K \assign \KFalse$ \tcp*{current synthesized constraint}

	\LongVersion{\tcp{Main loop}}
	\While{$\Waiting \neq \emptyset$}{
	
		Pick $\symbstate = (\loc, \C)$ from~$\Waiting$
		
		$\Waiting \assign \Waiting \setminus \{ \symbstate \} $

		$\Passed \assign \Passed \cup \{ \symbstate \}$
		
		\tcp{Target state: store constraint and do not explore successors}
		\lIf{$\loc \in \LocsTarget$}{%
			$\K \assign \K \lor \projectP{\C}$\nllabel{algo:EF:line2}
		}
		\tcp{Otherwise explore successors}
		
		\Else{%
			\For{each $\symbstate' \in \Succ(\symbstate)$}{
				
				\tcp{Add to waiting list only if not seen before}
				\If{$\symbstate' \notin \Waiting \land \symbstate' \notin \Passed$}{
					$\Waiting \assign \Waiting \cup \{ \symbstate' \}$
				}
			}
		}

	}
	
	\Return $\K$
	\caption{$\Synth(\A, \LocsTarget)$}
	\label{algo:EF}
\end{algorithm}
}

\subsection{Former proof of intractability of minimal-parameter reachability synthesis for L/U-PTAs}

\begin{proposition}[intractability of minimal-parameter reachability synthesis for L/U-PTAs]\label{proposition:intractability-MinTime-L/U:::::old}
	The solution to the minimal-parameter reachability synthesis for L/U-PTAs cannot always be represented in a formalism for which the emptiness of the intersection is decidable and for which the minimization of a variable is computable.\todo{rewrite proof from previous results}
\end{proposition}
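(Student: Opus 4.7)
The plan is to prove this by contradiction, combining \cref{lemma:paramtotime} with \cref{proposition:intractability-PMinTime-LUPTAs}: if we could represent the solution of the minimal-parameter reachability synthesis for L/U-PTAs in such a formalism, then through the reduction of \cref{lemma:paramtotime} we could also represent the solution of the minimal-time reachability synthesis for L/U-PTAs, which is precisely what \cref{proposition:intractability-PMinTime-LUPTAs} forbids.

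Concretely, starting from an arbitrary L/U-PTA~$\A$ with target~$\LocsTarget$, I would apply the construction from \cref{lemma:paramtotime} to obtain a PTA~$\A'$ extending~$\A$ with a never-resetting global clock~$\globaltime$ and a new parameter~$\globaltimeparam$, with the guard of every edge entering~$\LocsTarget$ conjoined with a constraint tying~$\globaltime$ to~$\globaltimeparam$. Then I would invoke the assumption to conclude that $\SynthMinPReach(\A', \LocsTarget, \globaltimeparam)$ is representable in the hypothetical formalism~F. Since F supports computable minimization of a variable and decidable intersection emptiness, we can extract the minimum value~$c$ of~$\globaltimeparam$ over this set (which equals $\PMinTReach(\A, \LocsTarget)$), then intersect with $\globaltimeparam = c$ and read off the set of valuations over the original parameters~$\Param$. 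This yields a representation of $\SynthPMinTReach(\A, \LocsTarget)$ in a formalism with decidable intersection emptiness, contradicting \cref{proposition:intractability-PMinTime-LUPTAs}.

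The main obstacle I expect is the preservation of the L/U-PTA structure through the reduction: the construction in \cref{lemma:paramtotime} introduces the guard $\globaltime = \globaltimeparam$, in which $\globaltimeparam$ would play the role of both a lower- and an upper-bound parameter, violating \cref{def:LUPTA}. To fix this I would weaken the added guard to $\globaltime \leq \globaltimeparam$, making $\globaltimeparam$ purely an upper-bound parameter and therefore preserving the L/U-PTA property of~$\A'$. The soundness of this relaxation is that minimizing~$\globaltimeparam$ over the valuations for which~$\LocsTarget$ is reachable still yields exactly the minimum time to reach~$\LocsTarget$: any valuation making target reachable at actual time~$t$ is compatible with $\globaltimeparam = t$, and conversely any valuation reaching target satisfying $\globaltime \leq \globaltimeparam$ gives an upper bound~$\globaltimeparam$ on the time, so the infima coincide.

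The remaining subtlety is to argue that going from the minimal-parameter result to the minimal-time result requires only operations supported by~F (intersection, emptiness, and minimization), without requiring an explicit projection operator. Here the key observation is that for contradicting \cref{proposition:intractability-PMinTime-LUPTAs} it suffices to derive decidability of intersection emptiness for the minimal-time synthesis set, and this can be phrased entirely in terms of intersection-emptiness queries on the minimal-parameter set extended with constraints on~$\globaltimeparam$; computability of the minimum of $\globaltimeparam$ then gives access to the specific value~$c$ needed to formulate these queries. Once this is observed, the contradiction is immediate, and the proposition follows.
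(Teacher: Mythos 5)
Your proof is correct, but it takes a genuinely different route from the one the paper gives for this particular statement. The paper's proof here is a direct reduction from reachability-emptiness for general PTAs: it starts from an \emph{arbitrary} PTA, turns it into an L/U-PTA by splitting each parameter~$\param'$ into $\param'^l$ and $\param'^u$ (as in \cite{BlT09}), uses the computable-minimization hypothesis to fix the minimized parameter at its optimum, and then observes that intersecting the synthesized set with the diagonal constraints $\bigwedge_{\param'} \param'^l = \param'^u$ recovers exactly the reachability-synthesis set of the original PTA, so decidable intersection-emptiness would decide reachability-emptiness, contradicting \cite{AHV93}. Your argument instead factors through \cref{lemma:paramtotime} and \cref{proposition:intractability-PMinTime-LUPTAs}, which is precisely how the paper proves the restated version of this proposition (\cref{proposition:intractability-MinTime-L/U}), so your proof is not novel relative to the paper as a whole, only relative to the proof attached to this statement. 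Two remarks on the comparison. First, your observation that the guard $\globaltime = \globaltimeparam$ from \cref{lemma:paramtotime} destroys the L/U structure, and your repair via $\globaltime \leq \globaltimeparam$ (making $\globaltimeparam$ a pure upper-bound parameter while preserving the infimum and the synthesized set at the optimum), is a genuine improvement: the paper's two-line proof of \cref{proposition:intractability-MinTime-L/U} silently glosses over this point. Your handling of the missing projection operator---answering intersection-emptiness queries against the minimal-time set by lifting them to queries against the minimal-parameter set conjoined with $\globaltimeparam = c$---is also sound, since \cref{proposition:intractability-PMinTime-LUPTAs} only ever needs such queries. Second, your route inherits, via \cref{proposition:intractability-PMinTime-LUPTAs}, a dependence on the undecidability of \emph{bounded-time} reachability-emptiness \cite{ALM18}, whereas the direct proof rests only on the classical result of \cite{AHV93}; in exchange, yours is more modular. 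Both proofs share the same unaddressed corner case (acknowledged by the authors elsewhere) when the optimum is a non-attained infimum of the form $(c, >)$.
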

\LongVersion{
\begin{proof}[by \emph{reductio ad absurdum}]
	Assume a PTA~$\A$ (not L/U), a parameter~$\param$ to minimize and a set~$\LocsTarget$ of locations.
	Transform it into an L/U-PTA as follows (\eg{} using the construction in \cite[Theorem~11]{BlT09}): for any parameter~$\param'$, any guard of the form
		$\clock \compOpLeq \param'$, $\clock \compOpGeq \param'$, $\clock = \param'$
	with
		$\clock \compOpLeq \param'^u$, $\clock \compOpGeq \param'^l$, $\param'^l \leq \clock \leq \param'^u$, respectively.
	The obtained PTA~$\ALU$ made of the parameters set $\{ \param'^l, \param'^u \mid \param' \in \Param \}$ is an L/U-PTA.
	
	Assume it is possible to represent the result of the minimal-parameter reachability synthesis for~$\A'$ in a formalism for which the emptiness of the intersection is decidable and for which the minimization of a variable is computable.
	Compute the minimal valuation of~$\param$ in this result.
	Assume the L/U-PTA~$\ALU'$ where $\param^l$ and $\param^u$ are replaced with this minimal valuation.
	This structure is still an L/U-PTA in $2(|\Param| - 1)$ dimensions.
	Computing $\SynthMinPReach(\ALU, \param, \LocsTarget)$ is equivalent to synthesizing all valuations reaching~$\LocsTarget$ in~$\ALU'$.
	Assume we obtain this result, say~$\K$.
	Now, we will show that if we can test the emptiness of the intersection of the formalism used to represent~$\K$, then we can solve the reachability-emptiness problem for PTAs, which is undecidable.
	
	Consider~$\A'$ which is identical to~$\A$ except that $\param$ is replaced with the minimal valuation computed above.
	This structure is still a general PTA in $|\Param| - 1$ dimensions.
	Let us consider the intersection of~$\K$ with $\param'^l = \param'^u$, for all~$\param'$.
	Then this result is exactly the result of the reachability synthesis for~$\LocsTarget$ in~$\A'$.
	Assume we can test the emptiness of the intersection of~$\K$ with $\param'^l = \param'^u$, for all~$\param'$: we are therefore able to solve the reachability-emptiness problem for PTAs, which is undecidable~\cite{AHV93}---which leads to a contradiction.
		\hfill{}\qed
\end{proof}
}

}

\end{document}